\newtheorem{thm}{\textbf{Theorem}}
\newtheorem{rmk}{\textbf{Remark}}
\newtheorem{lma}{\textbf{Lemma}}
\newtheorem{prop}{\textbf{Proposition}}
\newtheorem{corol}{\textbf{Corollary}}
\newcommand{\tabincell}[2]{\begin{tabular}{@{}#1@{}}#2\end{tabular}}
\begin{document}

%
\title{\huge{Stochastic Joint Radio and Computational Resource Management for Multi-User Mobile-Edge Computing Systems}}
\author{\IEEEauthorblockN{Yuyi Mao, Jun Zhang, S.H. Song, and Khaled B. Letaief, \emph{Fellow, IEEE}}


\thanks{The authors are with the Department of Electronic and Computer Engineering, The Hong Kong University
of Science and Technology, Clear Water Bay, Kowloon, Hong Kong (e-mail: \{ymaoac, eejzhang, eeshsong, eekhaled\}@ust.hk). Khaled B. Letaief is also with Hamad bin Khalifa University, Doha, Qatar (e-mail: kletaief@hbku.edu.qa).
}
\thanks{Part of this work was presented at IEEE Global Communications Conference (GLOBECOM), Washington, DC, Dec. 2016 \cite{YMAO1612a}.}
}

\maketitle

\vspace{-10pt}
\begin{abstract}
Mobile-edge computing (MEC) has recently emerged as a prominent technology to liberate mobile devices from computationally intensive workloads, by offloading them to the proximate MEC server. To make offloading effective, the radio and computational resources need to be dynamically managed, to cope with the time-varying computation demands and wireless fading channels. In this paper, we develop an online joint radio and computational resource management algorithm for multi-user MEC systems, with the objective as minimizing the long-term average weighted sum power consumption of the mobile devices and the MEC server, subject to a task buffer stability constraint. Specifically, at each time slot, the optimal CPU-cycle frequencies of the mobile devices are obtained in closed forms, and the optimal transmit power and bandwidth allocation for computation offloading are determined with the \emph{Gauss-Seidel method}; while for the MEC server, both the optimal frequencies of the CPU cores and the optimal MEC server scheduling decision are derived in closed forms. Besides, a delay-improved mechanism is proposed to reduce the execution delay. Rigorous performance analysis is conducted for the proposed algorithm and its delay-improved version, indicating that the weighted sum power consumption and execution delay obey an $\left[O\left(1\slash V\right),O\left(V\right)\right]$ tradeoff with $V$ as a control parameter. Simulation results are provided to validate the theoretical analysis and demonstrate the impacts of various parameters.
\end{abstract}
\vspace{-10pt}
\begin{keywords}
Mobile-edge computing, dynamic voltage and frequency scaling, radio and computational resource management, Lyapunov optimization.
\end{keywords}
%
\IEEEpeerreviewmaketitle
\section{Introduction}
The increasing popularity of smart mobile devices is driving the development of computation-intensive mobile applications with advanced features, e.g., interactive online gaming, gesture and face recognition, voice control, as well as 3D modeling. This poses more stringent requirements on the quality of computation experience, which cannot be easily satisfied by mobile devices due to their limited resources, e.g., the processing speed, memory size, and battery energy. As a result, new solutions to handle the explosive computation demands and the ever-increasing computation quality requirements are emerging \cite{Gubbi1309}. \emph{Mobile-edge computing} (MEC) is such a promising technology to release the tension between the computation-intensive applications and the resource-limited mobile devices \cite{ETSI14,Barbarossa1411,YMao17MECSurvey}. Different from conventional cloud computing systems, which rely on remote public clouds that will induce long latency due to data exchange, MEC offers computation capability within the radio access network. Therefore, by offloading the computation tasks from the mobile devices to the MEC servers, the quality of computation experience, including energy consumption and execution latency, can be greatly improved \cite{Satyanarayanan0910,Kumar1004,Kumar1302,WShi16}.

\subsection{Related Works}
Computation offloading for cloud computing systems has attracted significant attention from computer science and communications research communities in recent years. In order to prolong the battery lifetime and improve the computation performance, various code offloading frameworks, e.g., MAUI \cite{Cuervo1006} and ThinkAir \cite{Kosta1203}, were proposed. Nevertheless, the efficiency of computation offloading for MEC highly depends on the wireless channel condition, as computation offloading requires effective wireless data transmission between mobile devices and MEC servers. Therefore, computation offloading policies that take the wireless channel condition into consideration have been extensively studied most recently \cite{WZhang1309,Munoz1510,KWang16pp,CYou1605,YYu1612,XChen1504,XChen1610,YMao16,CYou1605JSAC,LPu1612,DHuang1206,Liu1607,ZJiang1512,Kwak1512,Kim15}. In \cite{WZhang1309}, for tasks with a strict execution deadline, the local execution energy consumption was minimized by adopting \emph{dynamic voltage and frequency scaling} (DVFS) techniques, and the energy consumption for computation offloading was optimized via data transmission scheduling. In \cite{Munoz1510}, a joint optimization of communication and computational resource allocation for femto-cloud computing systems was proposed, where the cloud server is formed by a set of femto access points. This study was extended to the \emph{cloud radio access networks} (C-RANs) with mobile cloud computing capability in \cite{KWang16pp}. Besides, resource allocation policies were proposed for MEC systems based on \emph{time division multiple access} (TDMA) and \emph{orthogonal frequency-division multiple access} (OFDMA) in \cite{YYu1612,CYou1605}, {while game-theoretic decentralized computation offloading algorithms were proposed for multi-user MEC systems in \cite{XChen1504} and \cite{XChen1610} for single- and multi-channel wireless environments, respectively.} Moreover, dynamic computation offloading policies have been developed for MEC systems powered by energy harvesting \cite{YMao16} and wireless power transfer \cite{CYou1605JSAC}, for applications where replacing/recharing the device batteries is costly and difficult. {Most recently, a novel task offloading framework based on network-assisted \emph{device-to-device} (D2D) communications was proposed in \cite{LPu1612}, which enables resource sharing among the mobile users.}

However, there are some limitations in the commonly adopted assumptions in \cite{WZhang1309,Munoz1510,KWang16pp,CYou1605,YYu1612,YMao16,CYou1605JSAC,LPu1612}: It is typically assumed that the computation tasks have strict delay requirements, and no new task will be generated before the old tasks are completed\slash abandoned. Such assumptions make the computation offloading design more tractable, as only short-term performance, e.g., the performance for executing a single task for each mobile device, needs to be considered and thus the associated optimization problems are typically deterministic. Nevertheless, they may be impractical for applications that can tolerate a certain period of execution latency, e.g., multi-media streaming and file backup. For such types of applications, the long-term system performance is more relevant, and stochastic task models should be adopted. In particular, the coupling among the randomly arrived tasks cannot be ignored, and stochastic computation offloading policies should be developed \cite{DHuang1206,Liu1607,ZJiang1512,Kwak1512,Kim15}. In order to minimize the long-term average energy consumption, a stochastic control algorithm was proposed in \cite{DHuang1206}, which determines the offloaded software components in an application. In \cite{Liu1607}, a delay-optimal stochastic task scheduling policy for single-user MEC systems was proposed based on the Markov decision process. The energy-delay tradeoff in single-user MEC systems with a multi-core mobile device and heterogeneous types of mobile applications were investigated in \cite{ZJiang1512} and \cite{Kwak1512}, respectively. For MEC systems with multiple devices, the optimal design becomes more challenging compared to single-user MEC systems, as the computational resource for task execution and the radio resource for computation offloading are shared by multiple mobile devices. In other words, the optimal system operations in multi-user MEC systems are not only temporally correlated due to the random computation task arrivals, but also spatially coupled due to the competition among multiple devices. {Moreover, the freedom of parallel local and remote processing makes the mobile execution strategy, computation offloading policy, and the operations at the MEC server interdependent.} Consequently, intelligent joint allocation of the radio and computational resource should be considered to maximize the benefits of MEC. An initial investigation for multi-user MEC systems with delay-tolerant applications was conducted in \cite{Kim15}, which, however, only focused on computational resource scheduling and failed to address radio resource management.

\subsection{Contributions}
In this paper, we investigate stochastic joint radio and computational resource management for multi-user MEC systems. Our major contributions are summarized as follows:

\begin{itemize}
\item We consider a general MEC system with multiple mobile devices and a physically proximate MEC server with \emph{frequency division multiple access} (FDMA). The MEC server has limited computation capability, which
generalizes our previous work in \cite{YMAO1612a}, where the MEC server is assumed to be computationally powerful with unlimited computational resources.
\item The average weighted sum power consumption of the mobile devices and the MEC server is adopted as the performance metric, which is able to address the cost of power consumption at different nodes in MEC systems. The available radio and computational resources are jointly managed to optimize the MEC system, including the CPU-cycle frequencies for the mobile and server CPUs, the transmit power and bandwidth allocation for computation offloading, as well as the task scheduling decision at the MEC server\footnote{In the conference version of this paper \cite{YMAO1612a}, only the CPU-cycle frequencies for the local CPUs, the transmit power and bandwidth allocation for computation offloading were optimized since the MEC server is assumed to have unlimited computational resources. Besides, the power consumption of the MEC server was viewed as a constant and not included in the optimization in \cite{YMAO1612a}.}. {This is a critical but highly non-trivial design consideration for multi-user MEC systems, since the optimal system operations are temporally and spatially correlated due to the stochastic computation task arrivals and the competition among multiple devices for the available resources, respectively.}
\item An average weighted sum power consumption minimization problem subject to a task buffer stability constraint is formulated, assuming causal \emph{side information} (SI) of the task arrival and wireless channel processes. {This is a very challenging stochastic optimization problem, which involves a large amount of SI as well as decision variables.} A low-complexity online algorithm is then proposed based on Lyapunov optimization. In each time slot, the system operation is determined by solving a deterministic problem, where the CPU-cycle frequencies and the MEC server scheduling decision are obtained in closed forms, while the transmit power and bandwidth allocation are obtained through an efficient Gauss-Seidel method. Besides, a delay-improved mechanism is designed for the proposed algorithm.
\item Performance analysis is conducted for the proposed algorithm and its delay-improved version, which not only shows their capability in achieving asymptotic optimality, but also explicitly characterizes the tradeoff between the weighted sum power consumption and the execution delay. Simulation results corroborate the theoretical analysis and show that the proposed algorithms are able to balance the weighted sum power consumption and execution delay performance. In addition, the impacts of various parameters are revealed, which demonstrate the necessity of a joint consideration on radio and computational resource management for multi-user MEC systems, and offer valuable guidelines for real deployment.
\end{itemize}

\subsection{Organization}
The organization of this paper is as follows. We introduce the system model and formulate the average weighted sum power consumption minimization problem in Section II and Section III, respectively. An online joint radio and computational resource management algorithm as well as a delay-improved mechanism are developed in Section IV. We conduct performance analysis for the proposed algorithms in Section V. Simulation results will be shown in Section VI, and we will conclude this paper in Section VII.

\section{System Model}
\vspace{-20pt}
\begin{figure}[h]
\centering
\includegraphics[width=0.6\textwidth]{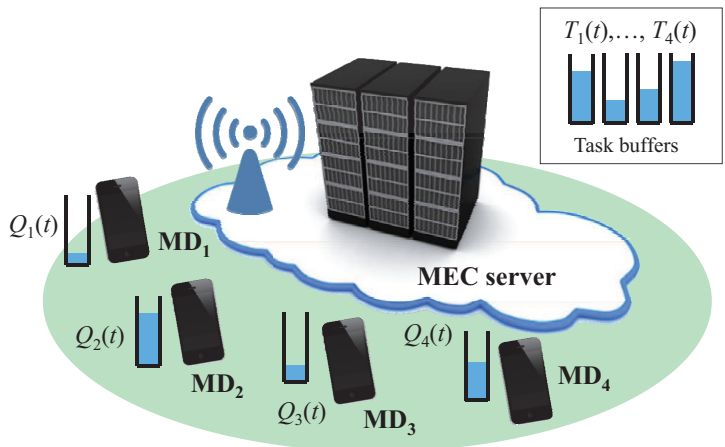}
\vspace{-10pt}
\caption{A mobile-edge computing system with four mobile devices (MDs). {The queue lengths of the task buffer at the $i$th mobile device, and the task buffer maintained by the MEC server for this device at the beginning of the $t$th time slot, are denoted as $Q_{i}\left(t\right)$ and $T_{i}\left(t\right)$, respectively.}}
\label{sysmodelMEC}
\end{figure}

We consider a mobile-edge computing (MEC) system as shown in Fig. \ref{sysmodelMEC}, where $N$ single-core mobile devices running computation-intensive applications are assisted by an MEC server. This corresponds to the scenarios where the mobile devices have relatively weak computation capability and the MEC is of supreme need, such as the \emph{Internet of Things} (IoT) applications and the \emph{wireless sensor networks} (WSNs) for surveillance \cite{Bonomi1208,CZhu1507}. However, our proposed algorithms can be adapted for multi-core mobile devices with minor modifications. The MEC server could be a small data center installed at a wireless \emph{access point} (AP) deployed by the telecom operator. Therefore, it can be accessed by the mobile devices through wireless channels, and will execute the computation tasks on behalf of the mobile devices \cite{YMao17MECSurvey,Satyanarayanan0910,WZhang1309}. {Besides, with the readily available wireless channel state information at the wireless AP, joint radio and computational resource management can be realized in MEC systems, which is an innovative feature that cannot be supported in traditional cloud computing systems.} By offloading part of the computation tasks to the MEC server, the mobile devices could not only enjoy a higher level of quality of computation experience, but also reduce the battery energy consumption \cite{Satyanarayanan0910,Kumar1004,Kumar1302,WShi16}.

The available system bandwidth is $\omega$ Hz, which is shared by the mobile devices using FDMA, and the noise power spectral density at the receiver of the MEC server is denoted as $N_{0}$. Time is slotted and the time slot length is $\tau$. For convenience, we denote the index sets of the mobile devices and the time slots as $\mathcal{N}\triangleq\{1,\cdots,N\}$ and $\mathcal{T}\triangleq\{0,1,\cdots\}$, respectively. {For ease of
reference, we list the key notations of our system model in Table \ref{notationtable}.}

{
\begin{table}[ht]
\center \protect
\caption{Summary of Key Notations}
\begin{tabular}{ll}
\Xhline{1.5pt}
{\textbf{Notation}} & {\textbf{Description}}  \tabularnewline
\Xhline{1.5pt}
{$\omega$} &System bandwidth \tabularnewline \hline
{$\mathcal{T}$ ($\mathcal{N}$)} &Index set of the time slots (mobile devices) \tabularnewline\hline
{$A_{i}\left(t\right)$} &{Amount of tasks arrived at the $i$th mobile device at the beginning time slot $t$} \tabularnewline \hline
{$Q_{i}\left(t\right)$} &{Queue length of the task buffer at the $i$th mobile device at the beginning of time slot $t$}   \tabularnewline \hline
{$T_{i}\left(t\right)$} &\tabincell{l}{Queue length of the task buffer maintained by the MEC server for the $i$th mobile \\ device at the beginning of time slot $t$} \tabularnewline \hline
{$D_{l,i}\left(t\right)$} &Amount of tasks executed locally at the $i$th mobile device in time slot $t$ \tabularnewline \hline
{$D_{r,i}\left(t\right)$} &Amount of tasks offloaded to the MEC server in time slot $t$ by the $i$th mobile device\tabularnewline \hline
{$D_{s,i}\left(t\right)$} &Amount of tasks from the $i$th mobile device executed by the MEC server in time slot $t$  \tabularnewline \hline
{$f_{i}\left(t\right)$} &CPU-cycle frequency of the $i$th mobile device in time slot $t$ \tabularnewline \hline
{$p_{l,i}\left(t\right)$} &Power consumption of the CPU at the $i$th mobile device in time slot $t$ \tabularnewline \hline
{$p_{{\rm{tx}},i}\left(t\right)$ } &Transmit power of the $i$th mobile device in time slot $t$ \tabularnewline \hline
{$f_{C,m}\left(t\right)$ } &CPU-cycle frequency of the $m$th CPU core at the MEC server in time slot $t$ \tabularnewline \hline
{$p_{\rm{ser}}\left(t\right)$ } &Power consumption of the MEC server in time slot $t$ \tabularnewline \hline
{$\alpha_{i}\left(t\right)$} &Proportion of bandwidth allocated for the $i$th mobile device in time slot $t$ \tabularnewline \hline
{$\Gamma_{i}\left(t\right)$ } &Channel power gain from the $i$th mobile device to the MEC server in time slot $t$ \tabularnewline
\Xhline{1.5pt}
\end{tabular}
\label{notationtable}
\end{table}
}

\subsection{Computation Task And Task Queueing Models}
We assume that the mobile devices are running independent and fine-grained tasks \cite{CYou1605,Kwak1512,Kim15,LChen11}: At the beginning of the $t$th time slot, $A_{i}\left(t\right)$ (bits) of computation tasks arrive at the $i$th mobile device, which can be processed starting from the $\left(t+1\right)$th time slot. Without loss of generality, we assume the $A_{i}\left(t\right)$'s in different time slots are independent and identically distributed (i.i.d.) within $\left[A_{i,\min},A_{i,\max}\right]$ with $\mathbb{E}\left[A_{i}\left(t\right)\right]=\lambda_{i},i\in\mathcal{N}$.

In each time slot, part of the computation tasks at the $i$th mobile device, denoted as $D_{l,i}\left(t\right)$, will be executed at the local CPU, while $D_{r,i}\left(t\right)$ bits of the computation tasks will be offloaded to the MEC server. {The arrived but not yet executed (or offloaded) tasks will be queued in the task buffer at each mobile device with sufficiently large capacity \cite{DHuang1206,Kwak1512,ZJiang1512,Kim15}}, and the queue lengths of the task buffers at the beginning of the $t$th time slot are denoted as $\mathbf{Q}\left(t\right)\triangleq\left[Q_{1}\left(t\right),\cdots,Q_{N}\left(t\right)\right]$, where $Q_{i}\left(t\right)$ evolves according to the following equation:
\begin{equation}
Q_{i}\left(t+1\right)=\max\{Q_{i}\left(t\right)-D_{\Sigma,i}\left(t\right),0\}+A_{i}\left(t\right),t\in \mathcal{T}.
\label{bufferdynamics}
\end{equation}
In (\ref{bufferdynamics}), $D_{\Sigma,i}\left(t\right)\triangleq D_{l,i}\left(t\right)+D_{r,i}\left(t\right)$ is the amount of tasks departing from the task buffer at the $i$th mobile device in the $t$th time slot.

{The MEC server maintains a task buffer for each mobile device to store the tasks that have been offloaded but not yet executed by the server, which is assumed to with sufficiently large capacity similar to the one at the mobile side.} Denote the queue lengths of the task buffers at the MEC server at the beginning of the $t$th time slot as $\mathbf{T}\left(t\right)\triangleq \left[T_{1}\left(t\right),\cdots,T_{N}\left(t\right)\right]$. We further denote the task scheduling decision of the MEC server at the $t$th time slot as $D_{s,n}\left(t\right),n\in\mathcal{N}$, where $D_{s,n}\left(t\right)$ is the amount of tasks from the $n$th mobile device executed by the MEC server in time slot $t$. Therefore, $T_{i}\left(t\right)$ evolves according to the following equation:
\begin{equation}
T_{i}\left(t+1\right)=\max\{T_{i}\left(t\right)-D_{s,i}\left(t\right),0\}+\min\{\max\{Q_{i}\left(t\right)-D_{l,i}\left(t\right),0\},D_{r,i}\left(t\right)\},t\in\mathcal{T},
\label{bufferdynamicsMEC}
\end{equation}
which indicates that only the tasks that have not been executed locally at the mobile devices will be stored in the task buffers at the MEC server. It is worthwhile to note that the departure function $D_{\Sigma,i}\left(t\right)$ may be larger than the amount of tasks in the corresponding local task buffer $Q_{i}\left(t\right)$, depending on the decided system operation. In this case, the excessive communication rates are allocated for transmitting dummy task inputs, which are not necessary to be processed by the MEC server. Without loss of generality, we assume the tasks buffers are empty initially, i.e., $Q_{i}\left(0\right)=T_{i}\left(0\right)=0,i\in\mathcal{N}$.

\subsection{Local Execution Model}
In order to process one bit of computation task input from the $i$th mobile device, $L_{i}$ CPU cycles will be needed, which depends on the types of applications and can be obtained by off-line measurements \cite{Miettinen10}. Denote the CPU-cycle frequency of the $i$th mobile device in the $t$th time slot as $f_{i}\left(t\right)$, which cannot exceed its maximum value $f_{i,\max}$. Thus, $D_{l,i}\left(t\right)$ can be expressed as
\begin{equation}
D_{l,i}\left(t\right)=\tau f_{i}\left(t\right)L_{i}^{-1}.
\label{Dlocal}
\end{equation}
{According to circuit theories, the CPU power is dominated by the dynamic power, which originates from the toggling activities of the logic
gates inside the CPU, and proportional to $v_{\rm{cir}}^{2}f_{c}$ in CMOS circuits, where $v_{\rm{cir}}$ and $f_{c}$ are the
circuit voltage and CPU-cycle frequency, respectively \cite{Vogeleer1309,Burd9608}. Besides, when operating at the
low voltage limits, the CPU-cycle frequency is approximately linear to the chip voltage \cite{Burd9608}}. Therefore, the power consumption for local execution at the $i$th mobile device is given by\footnote{{In order to apply the proposed algorithms in scenarios where multi-core CPUs are available at the devices, $D_{l,i}\left(t\right)$ and $p_{l,i}\left(t\right)$ should be modified as $\tau L_{i}^{-1}\cdot\sum_{z=1}^{Z_{i}}f_{i,z}\left(t\right)$ and $\sum_{z=1}^{Z_{i}}\kappa_{{\rm{mob}},i,z}f_{i,z}^{3}\left(t\right)$, respectively. Here, $Z_{i}\geq 1$ is the number of CPU cores at the $i$th mobile device,  $f_{i,z}\left(t\right)$, $\kappa_{{\rm{mob}},i,z}$ and $f_{i,z,\max}$ denote the CPU-cycle frequency, effective switched capacitance and maximum CPU-cycle frequency of the $z$th CPU core at the $i$th mobile device, respectively.  Then, $\{f_{i,z}\left(t\right)\}$'s could be determined for each mobile device under constraints $0\leq f_{i,z}\left(t\right)\leq f_{i,z,\max}, z=1,\cdots,Z_{i}, i\in\mathcal{N}$ with the proposed algorithms.}}
\begin{equation}
p_{l,i}\left(t\right)=\kappa_{{\rm{mob}},i} f^{3}_{i}\left(t\right),
\label{Plocal}
\end{equation}
where $\kappa_{{\rm{mob}},i}$ is the effective switched capacitance of the CPU at the $i$th mobile device, and it is related to the chip architecture \cite{Burd9608}.

\subsection{MEC Server Execution Model}

\subsubsection{Computation Offloading}
To offload the computation tasks for MEC server execution, the input bits of the tasks need to be delivered to the MEC server. We assume the wireless channels between the mobile devices and the MEC server are i.i.d. frequency-flat block fading. Denote the small-scale fading channel power gain from the $i$th mobile device to the MEC server in the $t$th time slot as $\gamma_{i}\left(t\right)$, which is assumed to have a bounded mean value, i.e., $\mathbb{E}\left[\gamma_{i}\left(t\right)\right]\triangleq \overline{\gamma_{i}}< \infty$. Thus, the channel power gain from the $i$th mobile device to the MEC server can be represented by $\Gamma_{i}\left(t\right)=\gamma_{i}\left(t\right)g_{0}\left(d_{0}\slash d_{i}\right)^{\theta}$, where $g_{0}$ is the path-loss constant, $\theta$ is the path-loss exponent, $d_{0}$ is the reference distance, and $d_{i}$ is the distance from the $i$th mobile device to the MEC server. {Since FDMA is utilized, according to the Shannon-Hartley formula \cite{TMCover1991}, the amount of computation tasks offloaded from the $i$th mobile device in time slot $t$ is given by
\begin{equation}
D_{r,i}\left(t\right)=
\begin{cases}
\alpha_{i}\left(t\right)\omega\tau\log_{2}\left(1+\frac{\Gamma_{i}\left(t\right)p_{{\rm{tx}},i}\left(t\right)}{\alpha_{i}\left(t\right)N_{0}\omega}\right), &\alpha_{i}\left(t\right)>0\\
0, &\alpha_{i}\left(t\right)=0,
\end{cases}
\label{Dremote}
\end{equation}
where $p_{{\rm{tx}},i}\left(t\right)$ is the transmit power with the maximum value of $p_{i,\max}$, and $\alpha_{i}\left(t\right)$ is the proportion of bandwidth allocated to the $i$th mobile device.} Denote $\bm{\alpha}\left(t\right)\triangleq\left[\alpha_{1}\left(t\right),\cdots,\alpha_{N}\left(t\right)\right]$ as the bandwidth allocation vector, which should be chosen from the feasible set $\mathcal{A}$, i.e.,
$\bm{\alpha}\left(t\right)\in\mathcal{A}\triangleq \{\bm{\alpha}\in\mathbb{R}_{+}^{N}\big|\sum_{i\in\mathcal{N}}\alpha_{i}\leq 1\}$ \cite{ZWang1510}.

\subsubsection{MEC Server Scheduling}

The MEC server is equipped with an $M$-core CPU, where the CPU cores could be heterogeneous and the set of CPU cores is denoted as $\mathcal{M}\triangleq \{1,\cdots,M\}$. Denote the CPU-cycle frequency of the $m$th CPU core in the $t$th time slot as $f_{C,m}\left(t\right)$, which should be less than its maximum value $f_{C_{m},\max}$. Thus, the power consumption of the CPU cores at the MEC server can be expressed as
\begin{equation}
p_{\rm{ser}}\left(t\right)= \sum_{m\in\mathcal{M}}\kappa_{{\rm{ser}},m} f_{C,m}^{3}\left(t\right),
\label{pwrCPU}
\end{equation}
where $\kappa_{{\rm{ser}},m}$ is the effective switched capacitance of the $m$th CPU core at the MEC server.

The CPU cycles offered by the server CPU can be allocated for the computation tasks offloaded from different mobile devices, i.e., the MEC server scheduling decision should satisfy the following constraint:
\begin{equation}
\sum_{n\in\mathcal{N}}D_{s,n}\left(t\right)L_{n}\leq \sum_{m\in\mathcal{M}}f_{C,m}\left(t\right)\tau,t\in\mathcal{T},
\label{MECserverScheduling}
\end{equation}
which means the number of CPU cycles needed for completing $\mathbf{D}_{s}\left(t\right)\triangleq \left[D_{s,1}\left(t\right),\cdots,D_{s,N}\left(t\right)\right]$ should be no larger than the available CPU cycles at the MEC server.

\section{Problem Formulation}
In this section, we will first introduce the performance metrics, namely, the average weighted sum power consumption of the MEC system and the average sum queue length of the task buffers. An average weighted sum power consumption minimization problem with a task buffer stability constraint will then be formulated.

\subsection{Performance Metrics}
We focus on the power consumption of the task execution processes in the local and server CPUs, as well as the transmit power for computation offloading. Energy consumed for other purposes, e.g., powering the screens of the mobile devices and supporting the basic operations in the MEC system, is ignored for simplicity. Therefore, we adopt the average weighted sum power consumption of different entities in the MEC system as the performance metric, which is defined as follows:
\begin{equation}
\overline{P}_{\Sigma}\triangleq \lim_{T\rightarrow +\infty} \frac{1}{T}\sum_{t=0}^{T-1}\mathbb{E}\left[\sum_{i\in\mathcal{N}}w_{i}\left(p_{{\rm{tx}},i}\left(t\right)+p_{l,i}\left(t\right)\right)+w_{N+1}p_{\rm{ser}}\left(t\right)\right],
\label{netwpwr}
\end{equation}
where $w_{i}\geq 0,i\in\mathcal{N}$ is the weight of the power consumption at the $i$th mobile device and $w_{N+1}\geq 0$ is the weight of the power consumption at the MEC server. {These parameters can be adjusted to address the cost of power consumption at different nodes in the MEC system, as well as to balance the power consumption of the mobile devices and the MEC server \cite{Ge1208}, which are assumed to be constants throughout this paper.} For convenience, we denote
$P_{\Sigma}\left(t\right)\triangleq \sum_{i\in\mathcal{N}}w_{i}\left(p_{{\rm{tx}},i}\left(t\right)+p_{l,i}\left(t\right)\right)+w_{N+1}p_{\rm{ser}}\left(t\right)$,
i.e., $\overline{P}_{\Sigma}=\lim_{T\rightarrow +\infty}\frac{1}{T}\sum_{t=0}^{T-1}\mathbb{E}\left[P_{\Sigma}\left(t\right)\right]$.

According to \emph{Little's Law} \cite{Queuetheory}, the average execution delay experienced by each mobile device is proportional to the average number of its tasks waiting in the MEC system, which is the sum queue length of the task buffers at the device and server sides. Thus, the average sum queue length of the task buffers for each mobile device is used as a measurement of the execution delay, which can be written as
\begin{equation}
\overline{q}_{\Sigma,i}=\lim_{T\rightarrow +\infty}\frac{1}{T}\sum_{t=0}^{T-1}\mathbb{E}\left[Q_{i}\left(t\right)+T_{i}\left(t\right)\right],i\in\mathcal{N}.
\end{equation}

\subsection{Average Weighted Sum Power Consumption Minimization}

We denote the system operation at the $t$th time slot as $\mathbf{X}\left(t\right)\!\triangleq\!\left[\mathbf{f}\left(t\right)\!,\mathbf{p}_{\rm{tx}}\left(t\right)\!,\bm{\alpha}\left(t\right)\!,\mathbf{f}_{C}\left(t\right)\!,\mathbf{D}_{s}\left(t\right)\right]$, where $\mathbf{f}\left(t\right)\!\triangleq\!\left[f_{1}\left(t\right),\!\cdots\!,f_{N}\left(t\right)\right]$, $\mathbf{p}_{{\rm{tx}}}\left(t\right)\!\triangleq\!\left[p_{{\rm{tx}},1}\left(t\right),\!\cdots\!,p_{{\rm{tx}},N}\left(t\right)\right]$ and $\mathbf{f}_{C}\left(t\right)\!\triangleq\!\left[f_{C,1}\left(t\right),\!\cdots\!,f_{C,M}\left(t\right)\right]$. Therefore, the average weighted sum power consumption minimization problem can be formulated in $\mathbf{P_{1}}$, where (\ref{alphadomain}) is the bandwidth allocation constraint, while (\ref{freqtxconstraint}) denotes the CPU-cycle frequency and the transmit power constraints for the mobile devices. (\ref{Serverfreqconstraint}) is the CPU-cycle frequencies constraint for the $M$ CPU cores at the MEC server. The MEC server scheduling constraint is imposed by (\ref{MECserverSchedulingP1}). (\ref{stabilityconstraint}) enforces the task buffers to be mean rate stable \cite{Neely10}, which guarantees that all the arrived computation tasks can be completed with finite delay.
\begin{align}
&\mathbf{P_{1}}: \min_{\{\mathbf{X}\left(t\right)\}}\ \ \overline{P}_{\Sigma}\nonumber\\
&\ \ \ \ \ \ \ \mathrm{s.t.\ \ }\bm{\alpha}\left(t\right)\in\mathcal{A},t\in\mathcal{T}\label{alphadomain}\\
&\ \ \ \ \ \ \ \ \ \ \ \ \ 0\leq f_{i}\left(t\right)\leq f_{i,\max}, 0\leq p_{{\rm{tx}},i}\left(t\right)\leq p_{i,\max},i\in\mathcal{N},t\in\mathcal{T} \label{freqtxconstraint}\\
&\ \ \ \ \ \ \ \ \ \ \ \ \ 0\leq f_{C,m}\left(t\right)\leq f_{C_{m},\max},m\in\mathcal{M},t\in\mathcal{T} \label{Serverfreqconstraint}\\
&\ \ \ \ \ \ \ \ \ \ \ \ \ \sum_{n\in\mathcal{N}}D_{s,n}\left(t\right)L_{n}\leq \sum_{m\in\mathcal{M}}f_{C,m}\left(t\right)\tau,D_{s,i}\left(t\right)\geq 0, i\in\mathcal{N},t\in\mathcal{T}\label{MECserverSchedulingP1}\\
&\ \ \ \ \ \ \ \ \ \ \ \ \lim_{T\rightarrow +\infty}\frac{\mathbb{E}\left[|Q_{i}\left(T\right)|\right]}{T}=0,\lim_{T\rightarrow +\infty}\frac{\mathbb{E}\left[|T_{i}\left(T\right)|\right]}{T}=0,i\in\mathcal{N}. \label{stabilityconstraint}
\end{align}

\begin{rmk}
It is not difficult to identify that $\mathbf{P_{1}}$ is a stochastic optimization problem, for which, the operations at both the mobile device side (including the CPU-cycle frequencies for the local CPUs, as well as the transmit power and bandwidth allocation for computation offloading) and the MEC server side (including the MEC server scheduling and the CPU-cycle frequencies for the multiple CPU cores) need to be determined at each time slot. {This is a highly challenging problem with a large amount of SI (including the channel and task buffer state information) to be handled and a large number of variables to be determined. Also, the optimal decisions are temporally correlated due to the randomly arrived tasks.} Besides, an efficient resource management policy for the mobile devices and the MEC server is critical since offloading the computation tasks in either an over-conservative or an over-aggressive manner will result in ineffective use of the available computational resources. Moreover, the spatial coupling of the bandwidth allocation among different mobile devices poses an additional challenge for the radio resource management, which is also interdependent with the computational resource management. Therefore, a joint optimization of the radio and computational resource allocation is essential.
\end{rmk}

Instead of solving $\mathbf{P_{1}}$ directly, we consider its modified version, denoted as $\mathbf{P_{2}}$, which is obtained by replacing $\mathcal{A}$ in (\ref{alphadomain}) by $\tilde{\mathcal{A}}$ with
$\tilde{\mathcal{A}}\triangleq\{\bm{\alpha}\in\mathbb{R}_{+}^{N}|\sum_{i\in\mathcal{N}}\alpha_{i}\leq 1,\alpha_{i}\geq\epsilon_{A},i\in\mathcal{N}\}$, $\epsilon_{A}\in\left(0,1\slash N\right)$.
Thus, the task departure function of computation offloading, i.e., $D_{r,i}\left(t\right)$, is continuous and differentiable with respect to $\bm{\alpha}\left(t\right)\in\tilde{\mathcal{A}}$, which helps to develop an efficient asymptotically optimal online algorithm for $\mathbf{P_{2}}$ that is also feasible for $\mathbf{P_{1}}$. Besides, although the optimal value of $\mathbf{P_{2}}$ is larger than that of $\mathbf{P_{1}}$, they can be made arbitrarily close by setting $\epsilon_{A}$ to be sufficiently small. As a result, we will focus on $\mathbf{P_{2}}$ in the remainder of this paper.

\section{Online Joint Radio and Computational Resource Management Algorithm}
In this section, we will propose an online joint radio and computational resource management algorithm to solve $\mathbf{P_{2}}$ based on Lyapunov optimization \cite{Neely10}. {With the assistance of the Lyapunov optimization framework, we are able to resolve this challenging stochastic optimization problem by solving a deterministic per-time slot problem at each time slot, for which, the optimal solution can be obtained with low complexity.} Besides, a delay-improved mechanism will be designed for the proposed Lyapunov optimization-based algorithm. In the next section, we will show the proposed algorithm and its delay-improved version are capable of achieving asymptotic optimality and reveal the power-delay tradeoff in multi-user MEC systems.

\subsection{The Lyapunov Optimization-Based Online Algorithm}
To present the algorithm, we first define the Lyapunov function as
\begin{equation}
L\left(\bm{\Theta}\left(t\right)\right)=\frac{1}{2}\sum_{i\in\mathcal{N}}\left[Q_{i}^{2}\left(t\right)+T_{i}^{2}\left(t\right)\right],
\label{Lyvfunc}
\end{equation}
where $\bm{\Theta}\left(t\right)\triangleq \left[\mathbf{Q}\left(t\right),\mathbf{T}\left(t\right)\right]$.
Thus, the conditional Lyapunov drift can be written as
\begin{equation}
\Delta\left(\bm{\Theta}\left(t\right)\right)=
\mathbb{E}\left[L\left(\bm{\Theta}\left(t+1\right)\right)-L\left(\bm{\Theta}\left(t\right)\right)|\bm{\Theta}\left(t\right)\right].
\label{Lyvdrift}
\end{equation}
Accordingly, the Lyapunov drift-plus-penalty function can be expressed as
\begin{equation}
\Delta_{V}\left(\bm{\Theta}\left(t\right)\right)=\Delta\left(\bm{\Theta}\left(t\right)\right)+V\cdot\mathbb{E}\left[P_{\Sigma}\left(t\right)|\bm{\Theta}\left(t\right)\right],
\label{Lyvdriftpenalty}
\end{equation}
where $V\in\left(0,+\infty\right)$ (${\rm{bits}}^{2}\cdot {\rm{W}}^{-1}$) is a control parameter in the proposed algorithm. We first find an upper bound of $\Delta_{V}\left(\bm{\Theta}\left(t\right)\right)$ under any feasible $\mathbf{X}\left(t\right)$, as specified in Lemma \ref{Lyvdriftpenaltyboundlma}.

\begin{lma}
For arbitrary $\mathbf{X}\left(t\right)$ such that $f_{i}\left(t\right)\in\left[0,f_{i,\max}\right]$, $p_{{\rm{tx}},i}\left(t\right)\in\left[0,p_{i,\max}\right]$, $D_{s,i}\left(t\right)\geq 0, i\in\mathcal{N}$, $f_{C,m}\left(t\right)\in\left[0,f_{C_{m},\max}\right]$,  $m\in\mathcal{M}$, $\sum_{i\in\mathcal{N}}D_{s,i}\left(t\right)L_{i}\leq \sum_{m\in\mathcal{M}}f_{C,m}\left(t\right)\tau$, and $\bm{\alpha}\left(t\right)\in\tilde{\mathcal{A}}$, $\Delta_{V}\left(t\right)$ is upper bounded, i.e.,
\begin{equation}
\begin{split}
\Delta_{V}\left(\bm{\Theta}\left(t\right)\right)&\leq C - \mathbb{E}\left[\sum_{i\in\mathcal{N}}Q_{i}\left(t\right)\left(D_{\Sigma,i}\left(t\right)-A_{i}\left(t\right)\right)|\bm{\Theta}\left(t\right)\right]\\
&-\mathbb{E}\left[\sum_{i\in\mathcal{N}}T_{i}\left(t\right)\left(D_{s,i}\left(t\right)-D_{r,i}\left(t\right)\right)|\bm{\Theta}\left(t\right)\right]
+V\cdot \mathbb{E}\left[P_{\Sigma}\left(t\right)|\bm{\Theta}\left(t\right)\right],
\end{split}
\label{Lyvdriftpenaltybound}
\end{equation}
where $C$ is a constant.
\label{Lyvdriftpenaltyboundlma}
\end{lma}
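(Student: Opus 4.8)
The plan is to bound the conditional Lyapunov drift by squaring the queue dynamics and then adding the penalty term. First I would start from the definition $L(\bm{\Theta}(t)) = \frac{1}{2}\sum_{i\in\mathcal{N}}[Q_i^2(t)+T_i^2(t)]$ and use the elementary inequality $(\max\{a-b,0\}+c)^2 \leq a^2+b^2+c^2+2a(c-b)$, valid for $a,b,c\geq 0$. Applying this to the buffer recursion \eqref{bufferdynamics} with $a=Q_i(t)$, $b=D_{\Sigma,i}(t)$, $c=A_i(t)$ gives
\begin{equation}
Q_i^2(t+1) - Q_i^2(t) \leq D_{\Sigma,i}^2(t) + A_i^2(t) - 2Q_i(t)\bigl(D_{\Sigma,i}(t)-A_i(t)\bigr).
\end{equation}
For the server buffer \eqref{bufferdynamicsMEC}, I would set $a=T_i(t)$, $b=D_{s,i}(t)$, and $c=\min\{\max\{Q_i(t)-D_{l,i}(t),0\},D_{r,i}(t)\}\leq D_{r,i}(t)$; since the bound is increasing in $c$ in the cross term (as $T_i(t)\geq 0$), I can replace $c$ by $D_{r,i}(t)$ there, obtaining
\begin{equation}
T_i^2(t+1) - T_i^2(t) \leq D_{s,i}^2(t) + D_{r,i}^2(t) - 2T_i(t)\bigl(D_{s,i}(t)-D_{r,i}(t)\bigr).
\end{equation}

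Next I would sum over $i\in\mathcal{N}$, divide by two, take the conditional expectation given $\bm{\Theta}(t)$, and add $V\cdot\mathbb{E}[P_\Sigma(t)|\bm{\Theta}(t)]$. This already produces the three cross terms appearing on the right-hand side of \eqref{Lyvdriftpenaltybound}, so it remains to show that the collection of quadratic terms $\frac{1}{2}\sum_i \mathbb{E}[D_{\Sigma,i}^2(t)+A_i^2(t)+D_{s,i}^2(t)+D_{r,i}^2(t)\,|\,\bm{\Theta}(t)]$ is bounded above by a finite constant $C$ independent of $\bm{\Theta}(t)$. This is where the feasibility constraints enter: $A_i(t)\leq A_{i,\max}$ by the task arrival model; $D_{l,i}(t) = \tau f_i(t)L_i^{-1} \leq \tau f_{i,\max}L_i^{-1}$ from \eqref{Dlocal} and \eqref{freqtxconstraint}; $D_{r,i}(t) \leq \omega\tau\log_2(1+\Gamma_i(t)p_{i,\max}/(\epsilon_A N_0\omega))$ — here I need $\alpha_i(t)\geq\epsilon_A$, which is exactly why the modified feasible set $\tilde{\mathcal{A}}$ was introduced, together with the fact that $x\log_2(1+y/x)$ is increasing in $x$ and the boundedness of $\mathbb{E}[\Gamma_i(t)]$ (the channel gain has bounded mean, so $\mathbb{E}[D_{r,i}^2(t)|\bm{\Theta}(t)]$ is finite after a further bound using concavity or the crude estimate $\log_2(1+z)\leq \Gamma_i(t)p_{i,\max}/(\epsilon_A N_0\omega \ln 2)$); and $D_{s,i}(t) \leq L_i^{-1}\sum_{m\in\mathcal{M}}f_{C_m,\max}\tau$ from the MEC server scheduling constraint \eqref{MECserverSchedulingP1}. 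Since $D_{\Sigma,i}(t) = D_{l,i}(t)+D_{r,i}(t)$, its square is bounded by $2D_{l,i}^2(t)+2D_{r,i}^2(t)$. Collecting all these uniform bounds and summing over the finite index set $\mathcal{N}$ yields an explicit finite constant $C$.

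I expect the main obstacle to be the term involving $D_{r,i}(t)$: unlike the other quantities, which are bounded deterministically and uniformly, $D_{r,i}(t)$ depends on the realized channel gain $\Gamma_i(t)$, which is unbounded in general (only its mean is assumed finite). So the constant $C$ must be obtained by bounding $\mathbb{E}[D_{r,i}^2(t)|\bm{\Theta}(t)]$ in expectation rather than almost surely; using $\log_2(1+z)\leq z/\ln 2$ gives $D_{r,i}(t)\leq \tau\Gamma_i(t)p_{i,\max}/(N_0\ln 2)$, whence $\mathbb{E}[D_{r,i}^2(t)|\bm{\Theta}(t)]$ requires a bounded second moment of $\Gamma_i(t)$ — if only the first moment is assumed, one instead keeps one factor as the log (which is at most $\omega\tau\log_2(1+\Gamma_i(t)p_{i,\max}/(\epsilon_A N_0\omega))$, but this still has the unboundedness issue) and I would note that in practice fading distributions (e.g. Rayleigh) have all moments finite, or alternatively absorb the channel randomness by conditioning appropriately. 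The cleanest route is to assume (as is standard and implicitly needed) that $\mathbb{E}[\Gamma_i^2(t)]<\infty$, making $C$ finite and completing the proof.
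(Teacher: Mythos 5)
Your drift decomposition is exactly the paper's: square the two queue recursions, use $\left(\max\{a-b,0\}+c\right)^{2}\leq a^{2}+b^{2}+c^{2}+2a\left(c-b\right)$, replace the actual server-side arrival $\min\{\max\{Q_{i}\left(t\right)-D_{l,i}\left(t\right),0\},D_{r,i}\left(t\right)\}$ by the upper bound $D_{r,i}\left(t\right)$, sum, halve, condition on $\bm{\Theta}\left(t\right)$, and add the penalty. The cross terms and the deterministic bounds on $A_{i}\left(t\right)$, $D_{l,i}\left(t\right)$, and $D_{s,i}\left(t\right)$ all match Appendix A.

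The genuine gap is in your treatment of $\mathbb{E}\left[D_{r,i}^{2}\left(t\right)|\bm{\Theta}\left(t\right)\right]$. You correctly observe that $\log_{2}\left(1+x\right)\leq x\slash\ln 2$ only linearizes one factor and therefore seems to demand $\mathbb{E}\left[\gamma_{i}^{2}\left(t\right)\right]<\infty$, and you resolve this by importing a second-moment assumption that the paper does not make (it assumes only $\mathbb{E}\left[\gamma_{i}\left(t\right)\right]=\overline{\gamma_{i}}<\infty$). The missing idea is the companion inequality $\log_{2}^{2}\left(1+x\right)\leq \frac{2x}{\left(\ln 2\right)^{2}}$ for $x\geq 0$ (equivalently $\ln^{2}\left(1+x\right)\leq 2x$, which follows since $\frac{d}{dx}\ln^{2}\left(1+x\right)=\frac{2\ln\left(1+x\right)}{1+x}\leq 2$). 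Applying it to $D_{r,i}^{2}\left(t\right)=\alpha_{i}^{2}\left(t\right)\omega^{2}\tau^{2}\log_{2}^{2}\bigl(1+\frac{\Gamma_{i}\left(t\right)p_{{\rm{tx}},i}\left(t\right)}{\alpha_{i}\left(t\right)N_{0}\omega}\bigr)$ makes one factor of $\alpha_{i}\left(t\right)$ cancel and leaves a bound that is \emph{linear} in $\Gamma_{i}\left(t\right)$, namely $D_{r,i}^{2}\left(t\right)\leq \frac{2\omega\tau^{2}p_{i,\max}\Gamma_{i}\left(t\right)}{N_{0}\left(\ln 2\right)^{2}}$, so only the first moment $\overline{\gamma_{i}}$ is needed and the constant $C$ is finite under the stated hypotheses. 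Note also that this route does not use $\alpha_{i}\left(t\right)\geq\epsilon_{A}$ at all for the constant $C$ (only $\alpha_{i}\left(t\right)\leq 1$), so the lower bound $\epsilon_{A}$ is not what rescues this step, contrary to what your sketch suggests.
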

\begin{proof}
Please refer to Appendix A.
\end{proof}

The main idea of the proposed online joint radio and computational resource management algorithm is to minimize the upper bound of $\Delta_{V}\left(\bm{\Theta}\left(t\right)\right)$ in the right-hand side of (\ref{Lyvdriftpenaltybound}) at each time slot. By doing so, the amount of tasks waiting in the task buffers can be maintained at a low level, meanwhile, the weighted sum power consumption of the mobile devices and the MEC server can be minimized. The proposed algorithm is summarized in Algorithm \ref{Algframework}, where a deterministic optimization problem $\mathbf{P}_{\rm{PTS}}$ needs to be solved at each time slot. It is worthy to note that the objective function of $\mathbf{P}_{\rm{PTS}}$ corresponds to the right-hand side of (\ref{Lyvdriftpenaltybound})\footnote{The terms that are not affected by $\mathbf{X}\left(t\right)$ is omitted in the objective function of $\mathbf{P}_{\rm{PTS}}$.}, and all the constraints in $\mathbf{P_{2}}$ except the task buffer stability constraint in (\ref{stabilityconstraint}) are retained in $\mathbf{P}_{\rm{PTS}}$. The optimal solution for $\mathbf{P}_{\rm{PTS}}$ will be developed in the next subsection.

\begin{algorithm}[t!]
\caption{The Online Joint Radio and Computational Resource Management Algorithm}
\label{alg1}
\begin{algorithmic}[1]
\STATE At the beginning of the $t$th time slot, obtain $\bm{\Theta}\left(t\right)$, $\{\Gamma_{i}\left(t\right)\}$, and $\{A_{i}\left(t\right)\}$.
\STATE Determine $\mathbf{f}\left(t\right), \mathbf{p}_{\rm{tx}}\left(t\right)$, $\bm{\alpha}\left(t\right)$, $\mathbf{D}_{s}\left(t\right)$, and $\textbf{f}_{C}\left(t\right)$ by solving
\begin{align}
&\mathbf{P}_{\rm{PTS}}:\min_{\mathbf{X}\left(t\right)} \ -\sum_{i\in\mathcal{N}}Q_{i}\left(t\right)D_{\Sigma,i}\left(t\right)-\sum_{i\in\mathcal{N}}T_{i}\left(t\right)\left(D_{s,i}\left(t\right)-D_{r,i}\left(t\right)\right)+V\cdot P_{\Sigma}\left(t\right)\nonumber\\
&\ \ \ \ \ \ \ \ \ \mathrm{s.t.}\ \ \ \bm{\alpha}\left(t\right)\in\tilde{\mathcal{A}}\ {\text{and}}\ (\ref{freqtxconstraint})-(\ref{MECserverSchedulingP1}),\nonumber
\nonumber
\end{align}
{where $P_{\Sigma}\left(t\right)\triangleq \sum_{i\in\mathcal{N}}w_{i}\left(p_{{\rm{tx}},i}\left(t\right)+p_{l,i}\left(t\right)\right)+w_{N+1}p_{\rm{ser}}\left(t\right)$.}
\STATE Update $\{Q_{i}\left(t\right)\}$ and $\{T_{i}\left(t\right)\}$ according to (\ref{bufferdynamics}) and (\ref{bufferdynamicsMEC}), respectively.
\STATE Set $t=t+1$.
\end{algorithmic}
\label{Algframework}
\end{algorithm}

\subsection{Optimal Solution For $\mathbf{P}_{\rm{PTS}}$}

In this subsection, we will develop the optimal solution for $\mathbf{P}_{\rm{PTS}}$, including the optimal CPU-cycle frequencies for the local CPUs, the transmit power and bandwidth allocation for computation offloading, as well as the MEC server scheduling and the CPU-cycle frequencies for the CPU cores at the MEC server. It can be seen that $\mathbf{P}_{\rm{PTS}}$ can be solved optimally by solving three sub-problems.

\textbf{Optimal CPU-Cycle Frequencies Of The Local CPUs:} It is straightforward to show that the optimal CPU-cycle frequencies for the local CPUs in time slot $t$ can be obtained by solving the following sub-problem $\mathbf{SP_{1}}$:
\begin{equation}
\begin{split}
&\mathbf{SP_{1}:}\min_{\mathbf{f}\left(t\right)}\ \sum_{i\in\mathcal{N}}\left(-Q_{i}\left(t\right) \tau f_{i}\left(t\right) L_{i}^{-1} + V \cdot w_{i}\kappa_{{\rm{mob}},i} f^{3}_{i}\left(t\right)\right)\\
&\ \ \ \ \ \ \ \ {\rm{s.t.}}\ \ 0\leq f_{i}\left(t\right)\leq f_{i,\max},i\in\mathcal{N}.
\end{split}
\end{equation}

First, since the objective function of $\mathbf{SP_{1}}$ is convex and its constraints are linear, $\mathbf{SP_{1}}$ is a convex optimization problem. Besides, as both the objective function and constraints of $\mathbf{SP_{1}}$ can be decomposed for individual $f_{i}\left(t\right)$, the optimization of $f_{i}\left(t\right)$ can be done separately at each mobile device. Therefore, the optimal $f^{\star}_{i}\left(t\right)$ is achieved at either the stationary point of $-Q_{i}\left(t\right) \tau f_{i}\left(t\right) L_{i}^{-1} + V \cdot w_{i}\kappa_{{\rm{mob}},i} f^{3}_{i}\left(t\right)$ or one of the boundary points, which is given by
\begin{equation}
f_{i}^{\star}\left(t\right)=
\begin{cases}
\min\bigg\{f_{i,\max},\sqrt{\frac{Q_{i}\left(t\right)\tau }{3\kappa_{{\rm{mob}},i}w_{i} V L_{i}}}\bigg\}, &w_{i}>0\\
f_{i,\max}, &w_{i}=0
\end{cases},i\in\mathcal{N}.
\end{equation}
\begin{rmk}
Note that $f_{i}^{\star}\left(t\right)$ is non-decreasing with $Q_{i}\left(t\right)$, as it is desirable to execute more tasks in order to keep the queue length of the local task buffer small. Besides, $f^{\star}_{i}\left(t\right)$ decreases with $V$, $L_{i}$, $\kappa_{{\rm{mob}},i}$ and $w_{i}$. In particular, with a larger value of $V$, the weight of $P_{\Sigma}\left(t\right)$ in $\mathbf{P}_{\rm{PTS}}$ becomes larger, and thus the local CPU slows down its frequency to reduce power consumption. With a larger value of $w_{i}$, the system places more emphasis on minimizing the power consumption at the $i$th mobile device, which also contributes to the reduction of $f_{i}^{\star}\left(t\right)$. On the other hand, with a larger value of $L_{i}$ ($\kappa_{{\rm{mob}},i}$), local execution becomes less effective as more CPU cycles (Joule of energy) will be needed to process per bit of task input, which again leads to a smaller CPU-cycle frequency. {It is worthwhile to mention that although $\mathbf{f}^{\star}\left(t\right)$ depends only on $\mathbf{Q}\left(t\right)$, the optimal local CPU speeds are temporally correlated with the optimal computation offloading decisions as will be derived in the sequel, since both of them affect the task buffer dynamics.}
\end{rmk}

\textbf{Optimal Transmit Power And Bandwidth Allocation:} The optimal $\mathbf{p}^{\star}_{\rm{tx}}\left(t\right)$ and $\bm{\alpha}^{\star}\left(t\right)$ can be obtained by solving the following sub-problem:
\begin{equation}
\begin{split}
&\mathbf{SP_{2}:}\min_{\bm{\alpha}\left(t\right),\mathbf{p}_{{\rm{tx}}}\left(t\right)}-\sum_{i\in\mathcal{N}}\left(Q_{i}\left(t\right)-T_{i}\left(t\right)\right)D_{r,i}\left(t\right)
+V\cdot \sum_{i\in\mathcal{N}}w_{i}p_{{\rm{tx}},i}\left(t\right)\\
&\ \ \ \ \ \ \ \ \ \ {\mathrm{s.t.}}\ \ \ \ 0\leq p_{{\rm{tx}},i}\left(t\right)\leq p_{i,\max},i\in\mathcal{N}\ \text{and}\ \bm{\alpha}\left(t\right)\in\tilde{\mathcal{A}},
\end{split}
\end{equation}
which is non-convex in general. In order to solve $\mathbf{SP_{2}}$, we identify an important property of its optimal solution, as shown in the following lemma.
\begin{lma}
For the set of mobile devices with $Q_{i}\left(t\right)\leq T_{i}\left(t\right)$ (denoted as $\tilde{\mathcal{N}}\left(t\right)\triangleq \{i|i\in\mathcal{N},Q_{i}\left(t\right)\leq T_{i}\left(t\right)\}$), the optimal transmit power and bandwidth allocation are given by $p_{{\rm{tx}},i}^{\star}\left(t\right)=0$ and $\alpha_{i}^{\star}\left(t\right)=\epsilon_{A}$, $i\in\tilde{\mathcal{N}}\left(t\right)$.
\label{propertySP2}
\end{lma}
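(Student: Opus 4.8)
The plan is to prove Lemma \ref{propertySP2} by a short exchange argument that exploits the separable structure of the objective of $\mathbf{SP_{2}}$ across mobile devices. First I would rewrite the objective as $\sum_{i\in\mathcal{N}}\left[\left(T_{i}\left(t\right)-Q_{i}\left(t\right)\right)D_{r,i}\left(t\right)+Vw_{i}p_{{\rm{tx}},i}\left(t\right)\right]$, and observe that the only coupling among these per-device terms is through the bandwidth constraint $\sum_{i\in\mathcal{N}}\alpha_{i}\left(t\right)\leq 1$ in $\tilde{\mathcal{A}}$, since $D_{r,i}\left(t\right)$ in (\ref{Dremote}) depends only on $\alpha_{i}\left(t\right)$, $p_{{\rm{tx}},i}\left(t\right)$ and $\Gamma_{i}\left(t\right)$. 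For any $i\in\tilde{\mathcal{N}}\left(t\right)$ we have $T_{i}\left(t\right)-Q_{i}\left(t\right)\geq 0$, $D_{r,i}\left(t\right)\geq 0$, $w_{i}\geq 0$ and $p_{{\rm{tx}},i}\left(t\right)\geq 0$, so the $i$th term is nonnegative, hence lower bounded by $0$.

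Next I would show this lower bound is attained without loss of optimality. Since $\tilde{\mathcal{A}}\times\prod_{i\in\mathcal{N}}\left[0,p_{i,\max}\right]$ is compact and the objective is continuous on $\tilde{\mathcal{A}}$ (this being exactly why $\mathbf{P_{2}}$ uses $\tilde{\mathcal{A}}$ instead of $\mathcal{A}$, as $D_{r,i}\left(t\right)$ is continuous and differentiable for $\alpha_{i}\left(t\right)\geq\epsilon_{A}>0$), an optimal solution of $\mathbf{SP_{2}}$ exists; take any such $\left(\bm{\alpha}\left(t\right),\mathbf{p}_{{\rm{tx}}}\left(t\right)\right)$. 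For every $i\in\tilde{\mathcal{N}}\left(t\right)$, replace $p_{{\rm{tx}},i}\left(t\right)$ by $0$: by (\ref{Dremote}) this forces $D_{r,i}\left(t\right)=0$, so the $i$th term drops to its minimum value $0$, while all other per-device terms are untouched and feasibility is preserved, so the objective cannot increase and the modified point remains optimal with $p_{{\rm{tx}},i}^{\star}\left(t\right)=0$ for $i\in\tilde{\mathcal{N}}\left(t\right)$. Then, since $D_{r,i}\left(t\right)=0$ now holds for these devices regardless of $\alpha_{i}\left(t\right)$, I would further replace $\alpha_{i}\left(t\right)$ by $\epsilon_{A}$ for each $i\in\tilde{\mathcal{N}}\left(t\right)$; this leaves the objective unchanged, and feasibility is retained because each $\alpha_{i}\left(t\right)\geq\epsilon_{A}$ is only decreased, so $\sum_{i\in\mathcal{N}}\alpha_{i}\left(t\right)$ stays at most $1$ and every $\alpha_{i}\left(t\right)$ stays at least $\epsilon_{A}$. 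This yields an optimal solution of $\mathbf{SP_{2}}$ with $p_{{\rm{tx}},i}^{\star}\left(t\right)=0$ and $\alpha_{i}^{\star}\left(t\right)=\epsilon_{A}$ for all $i\in\tilde{\mathcal{N}}\left(t\right)$, as claimed.

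The argument is almost mechanical, and the only point needing care is that the modifications performed for the devices in $\tilde{\mathcal{N}}\left(t\right)$ must not harm the devices in $\mathcal{N}\setminus\tilde{\mathcal{N}}\left(t\right)$, whose per-device terms can be negative. This is settled by noting that neither modification touches the transmit powers or bandwidth shares of those devices — hence their $D_{r,j}\left(t\right)$, and the complementary part of the objective, are unaffected — and that the bandwidth reallocation only \emph{relaxes} the constraint $\sum_{i\in\mathcal{N}}\alpha_{i}\left(t\right)\leq 1$. As an after-the-fact remark, once Lemma \ref{propertySP2} is invoked, $\mathbf{SP_{2}}$ collapses to an optimization over the devices in $\mathcal{N}\setminus\tilde{\mathcal{N}}\left(t\right)$ only, which is the reduced form exploited in the subsequent Gauss--Seidel development.
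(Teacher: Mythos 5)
Your proposal is correct and follows essentially the same route as the paper: both arguments observe that for $i\in\tilde{\mathcal{N}}\left(t\right)$ the per-device term $\left(T_{i}\left(t\right)-Q_{i}\left(t\right)\right)D_{r,i}\left(t\right)+Vw_{i}p_{{\rm{tx}},i}\left(t\right)$ is minimized at $p_{{\rm{tx}},i}\left(t\right)=0$ (you via nonnegativity and attainment of zero, the paper via monotonicity in $p_{{\rm{tx}},i}\left(t\right)$, which are equivalent here), and then that the bandwidth of a non-transmitting device does not affect the objective, so it can be shrunk to $\epsilon_{A}$ and released to the devices in $\tilde{\mathcal{N}}^{\rm{c}}\left(t\right)$. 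Your added remarks on existence of an optimizer and on not harming the other devices only make the same argument more explicit.
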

\begin{proof}
For arbitrary $i\in\tilde{\mathcal{N}}\left(t\right)$, as $D_{r,i}\left(t\right)$ is a non-decreasing function of $p_{{\rm{tx}},i}\left(t\right)$, the term $-\left(Q_{i}\left(t\right)-T_{i}\left(t\right)\right)D_{r,i}\left(t\right)+V\cdot w_{i}p_{{\rm{tx}},i}\left(t\right)$ in the objective function of $\mathbf{SP_{2}}$ is non-decreasing with $p_{{\rm{tx}},i}\left(t\right)$, i.e., $p_{{\rm{tx}},i}^{\star}\left(t\right)=0$. Besides, with $p_{{\rm{tx}},i}^{\star}\left(t\right)=0$, $\alpha_{i}\left(t\right)$ has no contribution to the value of the objective function in $\mathbf{SP_{2}}$. Thus, the portion of bandwidth exceeding $\epsilon_{A}$ that is originally allocated for the $i$th mobile device can be re-allocated for the devices in $\tilde{\mathcal{N}}^{\rm{c}}\left(t\right)\triangleq\mathcal{N}\setminus \tilde{\mathcal{N}}\left(t\right)$, i.e., $\alpha^{\star}_{i}\left(t\right)=\epsilon_{A}, i\in\tilde{\mathcal{N}}\left(t\right)$.
\end{proof}

Lemma \ref{propertySP2} implies that a mobile device will offload only when the amount of tasks in the local task buffer is greater than that in its task buffer at the server. Intuitively, this is because when $T_{i}\left(t\right)\geq Q_{i}\left(t\right)$, the amount of tasks waiting in the MEC server is relatively large, and offloading tasks not only incurs transmit power consumption, but also brings detrimental effect on the execution delay, and thus it is inferior to local execution. Hence, by ruling out the mobile devices in $\tilde{\mathcal{N}}\left(t\right)$, we simplify $\mathbf{SP_{2}}$ as follows:
\begin{equation}
\begin{split}
&\mathbf{SP}'_{\mathbf{2}}\mathbf{:}\min_{\alpha_{i}\left(t\right),{p}_{{\rm{tx}},i}\left(t\right),i\in\tilde{\mathcal{N}}^{\rm{c}}\left(t\right)}-\sum_{i\in\tilde{\mathcal{N}}^{\rm{c}}\left(t\right)}\left(Q_{i}\left(t\right)-T_{i}\left(t\right)\right)D_{r,i}\left(t\right)
+V\cdot \sum_{i\in\tilde{\mathcal{N}}^{\rm{c}}\left(t\right)}w_{i}p_{{\rm{tx}},i}\left(t\right)\\
&\ \ \ \ \ \ \ \ \ \ \ \ \ \ {\mathrm{s.t.}}\ \ \ \ \ \ \ \ \ 0\leq p_{{\rm{tx}},i}\left(t\right)\leq p_{i,\max},i\in\tilde{\mathcal{N}}^{\rm{c}}\left(t\right)\\
&\ \ \ \ \ \ \ \ \ \ \ \ \ \ \ \ \ \ \ \ \ \ \ \ \ \ \  {\alpha}_{i}\left(t\right)\geq \epsilon_{A}, i\in\tilde{\mathcal{N}}^{\rm{c}}\left(t\right), \sum_{i\in\tilde{\mathcal{N}}^{\rm{c}}\left(t\right)} {\alpha}_{i}\left(t\right)\leq 1- |\tilde{\mathcal{N}}\left(t\right)|\cdot \epsilon_{A}.
\end{split}
\end{equation}

Since $D_{i}\left(t\right)\triangleq \tau \omega\log_{2}\left(1+\Gamma_{i}\left(t\right)p_{{\rm{tx}},i}\left(t\right)\left(N_{0}\omega\right)^{-1}\right)$ is concave with respect to $p_{{\rm{tx}},i}\left(t\right)$ and $D_{r,i}\left(t\right)$ is a perspective function of $D_{i}\left(t\right)$, i.e., $D_{r,i}\left(t\right)=\alpha_{i}\left(t\right)D_{i}\left(t\right)|_{p_{{\rm{tx}},i}\left(t\right)\slash \alpha_{i}\left(t\right)}$, $D_{r,i}\left(t\right)$ is jointly concave with respect to $\alpha_{i}\left(t\right)$ and $p_{{\rm{tx}},i}\left(t\right)$. Therefore,
$\mathbf{SP}'_{\mathbf{2}}$ is a convex optimization problem, and thus standard convex algorithms, such as the \emph{interior point method} \cite{Boyd04}, can be applied for the optimal solution. However, generic convex algorithms suffer from relatively high complexity, as they are developed for general convex problems and do not make full use of the problem structures. Motivated by this, we propose to solve $\mathbf{SP}'_{\mathbf{2}}$ by optimizing the transmit power and the bandwidth allocation in an alternating manner, where in each iteration, the optimal transmit powers are obtained in closed forms and the optimal bandwidth allocation is determined by the \emph{Lagrangian method}. Since $\mathbf{SP}'_{\mathbf{2}}$ is convex and its feasible region is a Cartesian product of those of $\{p_{{\rm{tx}},i}\left(t\right)\}$ and $\{\alpha_{i}\left(t\right)\}$, the alternating minimization procedure is guaranteed to converge to the global optimal solution, which is termed as the \emph{Gauss-Seidel method} in literature \cite{Grippo0004}. Next, we derive the solution in each iteration of the Gauss-Seidel method for fixed bandwidth and transmit power allocation, respectively.

\textbf{1) Optimal Transmit Power:} For a fixed bandwidth allocation, $\{\alpha_{i}\left(t\right)\},i\in\tilde{\mathcal{N}}^{\rm{c}}\left(t\right)$, the optimal transmit power for mobile devices in $\tilde{\mathcal{N}}^{\rm{c}}\left(t\right)$ can be obtained by solving
\begin{equation}
\begin{split}
&\mathbf{P}_{\rm{PWR}}:\min_{p_{{\rm{tx}},i}\left(t\right),i\in\tilde{\mathcal{N}}^{\rm{c}}\left(t\right)}-\sum_{i\in\tilde{\mathcal{N}}^{\rm{c}}\left(t\right)}\left(Q_{i}\left(t\right)-T_{i}\left(t\right)\right)D_{r,i}\left(t\right)+V\cdot \sum_{i\in\tilde{\mathcal{N}}^{\rm{c}}\left(t\right)}
w_{i}p_{{\rm{tx}},i}\left(t\right)\\
&\ \ \ \ \ \ \ \ \ \ \ \ \ \ {\rm{s.t.}}\ \ \ \ \ \ 0\leq p_{{\rm{tx}},i}\left(t\right)\leq p_{i,\max},i\in\tilde{\mathcal{N}}^{\rm{c}}\left(t\right).
\end{split}
\end{equation}

Similar to $\mathbf{SP}_{\mathbf{1}}$, $\mathbf{P}_{\rm{PWR}}$ can be decomposed for individual mobile device, and the optimal $p^{\star}_{{\rm{tx}},i}\left(t\right)$ is achieved at either the stationary point of $-\left(Q_{i}\left(t\right)-T_{i}\left(t\right)\right)D_{r,i}\left(t\right)+V\cdot w_{i}p_{{\rm{tx}},i}\left(t\right)$ or one of the boundary points, which is given in closed form by
\begin{equation}
p^{\star}_{{\rm{tx}},i}\left(t\right)=
\begin{cases}
\min\bigg\{\alpha_{i}\left(t\right)\omega\max\bigg\{\frac{\left(Q_{i}\left(t\right)-T_{i}\left(t\right)\right)\tau}{\ln2 \cdot V\cdot w_{i}}-\frac{N_{0}}{\Gamma_{i}\left(t\right)},0\bigg\},p_{i,\max}\bigg\}, &w_{i}>0\\
p_{i,\max}, &w_{i}=0
\end{cases},i\in\tilde{\mathcal{N}}^{\rm{c}}\left(t\right).
\label{optPWR}
\end{equation}

\textbf{2) Optimal Bandwidth Allocation:} For a fixed transmit power allocation $\{{p}_{{\rm{tx}},i}\left(t\right)\}$, $i\in\tilde{\mathcal{N}}^{\rm{c}}\left(t\right)$, the optimal bandwidth allocation can be obtained by solving the following problem:
\begin{equation}
\begin{split}
&\mathbf{P}_{\rm{BW}}:\min_{\alpha_{i}\left(t\right),i\in\tilde{\mathcal{N}}^{\rm{c}}\left(t\right)} \ -\sum_{i\in\tilde{\mathcal{N}}^{\rm{c}}\left(t\right)}\left(Q_{i}\left(t\right)-T_{i}\left(t\right)\right)D_{r,i}\left(t\right)\\
&\ \ \ \ \ \ \ \ \ \ \ \ {\rm{s.t.}}\ \ \ \ \ \ \alpha_{i}\left(t\right)\geq \epsilon_{A}, i\in\tilde{\mathcal{N}}^{\rm{c}}\left(t\right), \sum_{i\in\tilde{\mathcal{N}}^{\rm{c}}\left(t\right)}\alpha_{i}\left(t\right)\leq 1- |\tilde{\mathcal{N}}\left(t\right)|\cdot \epsilon_{A},
\end{split}
\end{equation}
which is more challenging as the bandwidth allocation is coupled among different mobile devices. Fortunately, the Lagrangian method offers an effective solution for $\mathbf{P}_{\rm{BW}}$. Specifically, the partial Lagrangian of $\mathbf{P}_{\rm{BW}}$ can be written as\footnote{We slightly abuse notation by using $\bm{\alpha}\left(t\right)$ to denote $\alpha_{i}\left(t\right),i\in\tilde{\mathcal{N}}^{\rm{c}}\left(t\right)$.}
\begin{equation}
\mathcal{L}\left(\bm{\alpha}\left(t\right),\lambda\left(t\right)\right)=-\sum_{i\in\tilde{\mathcal{N}}^{\rm{c}}\left(t\right)}\left(Q_{i}\left(t\right)-T_{i}\left(t\right)\right)D_{r,i}\left(t\right)+\lambda\left(t\right) \left[\sum_{i\in\tilde{\mathcal{N}}^{\rm{c}}\left(t\right)}\alpha_{i}\left(t\right)-\left(1- |\tilde{\mathcal{N}}\left(t\right)|\cdot \epsilon_{A}\right)\right],
\end{equation}
where $\lambda\left(t\right)\geq 0$ is the Lagrangian multiplier associated with $\sum_{i\in\tilde{\mathcal{N}}^{\rm{c}}\left(t\right)}\alpha_{i}\left(t\right)\leq 1- |\tilde{\mathcal{N}}\left(t\right)|\cdot \epsilon_{A}$. When $\exists i \in \tilde{\mathcal{N}}^{\rm{c}}\left(t\right)$ such that $p_{{\rm{tx}},i}\left(t\right)\Gamma_{i}\left(t\right)>0$, based on the Karush-Kuhn-Tucker (KKT) conditions, the optimal bandwidth allocation $\{\alpha_{i}^{\star}\left(t\right)\},i\in \tilde{\mathcal{N}}^{\rm{c}}\left(t\right)$ and the optimal Lagrangian multiplier $\lambda^{\star}\left(t\right)$ should satisfy the following equation set:
\begin{equation}
\begin{cases}
&\alpha_{i}^{\star}\left(t\right)=\max\{\epsilon_{A},\mathcal{R}_{i}\left(\lambda^{\star}\left(t\right)\right)\},i\in\tilde{\mathcal{N}}^{\rm{c}}\left(t\right),\lambda^{\star}\left(t\right)>0\\
&\sum_{i\in\tilde{\mathcal{N}}^{\rm{c}}\left(t\right)}\alpha_{i}^{\star}\left(t\right)=1- |\tilde{\mathcal{N}}\left(t\right)|\cdot \epsilon_{A}.
\end{cases}
\label{KKT}
\end{equation}
In (\ref{KKT}), if $p_{{\rm{tx}},i}\left(t\right)\Gamma_{i}\left(t\right)=0$, we define $\mathcal{R}_{i}\left(\lambda\left(t\right)\right)\triangleq \epsilon_{A}$; Otherwise, $\mathcal{R}_{i}\left(\lambda\left(t\right)\right)$ denotes the root of $\frac{\partial\mathcal{L}\left(\bm{\alpha}\left(t\right),\lambda\left(t\right)\right)}{\partial \alpha_{i}\left(t\right)}=-\left(Q_{i}\left(t\right)-T_{i}\left(t\right)\right)\frac{d D_{r,i}\left(t\right)}{d \alpha_{i}\left(t\right)}+\lambda\left(t\right)=0$ for $\lambda\left(t\right)>0$, which is positive and unique as $\frac{dD_{r,i}\left(t\right)}{d\alpha_{i}\left(t\right)}$ decreases with $\alpha_{i}\left(t\right)$, $\lim_{\alpha_{i}\left(t\right)\rightarrow 0^{+}}\frac{dD_{r,i}\left(t\right)}{d\alpha_{i}\left(t\right)}=+\infty$, and $\lim_{\alpha_{i}\left(t\right)\rightarrow +\infty}\frac{dD_{r,i}\left(t\right)}{d\alpha_{i}\left(t\right)}=0$. Thus, it suggests a bisection search over $\left[\lambda_{L}\left(t\right),\lambda_{U}\left(t\right)\right]$ for the optimal $\lambda^{\star}\left(t\right)$. Here, $\lambda_{L}\left(t\right)$ and $\lambda_{U}\left(t\right)$ can be chosen as
\begin{equation}
\begin{cases}
&\lambda_{L}\left(t\right)=\max_{i\in\tilde{\mathcal{N}}^{\rm{c}}\left(t\right)}\left(Q_{i}\left(t\right)-T_{i}\left(t\right)\right)\frac{dD_{r,i}\left(t\right)}{d\alpha_{i}\left(t\right)}|_{\alpha_{i}\left(t\right)=1- |\tilde{\mathcal{N}}\left(t\right)|\cdot \epsilon_{A}}\\
&\lambda_{U}\left(t\right)=\max_{i\in\tilde{\mathcal{N}}^{\rm{c}}\left(t\right)}\left(Q_{i}\left(t\right)-T_{i}\left(t\right)\right)\frac{dD_{r,i}\left(t\right)}{d\alpha_{i}\left(t\right)}|
_{\alpha_{i}\left(t\right)=\epsilon_{A}},
\end{cases}
\end{equation}
which satisfy $\sum_{i\in\tilde{\mathcal{N}}^{\rm{c}}\left(t\right)}\max\{\epsilon_{A},\mathcal{R}_{i}\left(\lambda_{L}\left(t\right)\right)\}>1- |\tilde{\mathcal{N}}\left(t\right)|\cdot \epsilon_{A}$ and $\sum_{i\in\tilde{\mathcal{N}}^{\rm{c}}\left(t\right)}\max\{\epsilon_{A},\mathcal{R}_{i}\left(\lambda_{U}\left(t\right)\right)\}<1- |\tilde{\mathcal{N}}\left(t\right)|\cdot \epsilon_{A}$, respectively. Hence, $\mathcal{R}_{i}\left(\lambda\left(t\right)\right)$ can be obtained by a bisection search over $\left(0,1- |\tilde{\mathcal{N}}\left(t\right)|\cdot \epsilon_{A}\right]$ when $p_{{\rm{tx}},i}\left(t\right)\Gamma_{i}\left(t\right)>0$, and the searching process for the optimal $\lambda^{\star}\left(t\right)$ will be terminated when $|\sum_{i\in\tilde{\mathcal{N}}^{\rm{c}}\left(t\right)}\max\{\epsilon_{A},\mathcal{R}_{i}\left(\lambda\left(t\right)\right)\}-(1- |\tilde{\mathcal{N}}\left(t\right)|\cdot \epsilon_{A})|<\xi$, where $\xi$ is the accuracy of the algorithm. When $p_{{\rm{tx}},i}\left(t\right)\Gamma_{i}\left(t\right)=0,\forall i\in \tilde{\mathcal{N}}^{\rm{c}}\left(t\right)$, $\alpha_{i}^{\star}\left(t\right)=\epsilon_{A},i\in \tilde{\mathcal{N}}^{\rm{c}}\left(t\right)$ is the optimal bandwidth allocation. Details of the Lagrangian method for $\mathbf{P}_{\rm{BW}}$ are summarized in Algorithm \ref{Lagrangianmtd}.

{
\begin{rmk}
The proposed Gauss-Seidel method updates the transmit power and bandwidth allocation alternately in each iteration, which will converge to the optimal solution of $\mathbf{SP'_{2}}$ with a sublinear convergence rate \cite{ABeck15}. In each iteration, the major complexity comes from the Lagrangian method for the optimal bandwidth allocation, which employs bisection search for $\lambda^{\star}\left(t\right)$, and will terminate within $\log_{2}\left(\frac{\lambda_{U}\left(t\right)-\lambda_{L}\left(t\right)}{\lambda_{\xi}\left(t\right)}\right)$ iterations in time slot $t$ ($\lambda_{\xi}\left(t\right)$ corresponds to the accuracy requirement for $\lambda^{\star}\left(t\right)$ given $\xi$). Besides, in order to search for the optimal $\lambda^{\star}\left(t\right)$, $\log_{2}\left(\frac{1}{\varsigma}\right)$ evaluations for $\frac{\partial \mathcal{L}\left(\bm{\alpha}\left(t\right),\lambda\left(t\right)\right)}{\partial \alpha_{i}\left(t\right)}$ are needed in order to determine $\mathcal{R}_{i}\left(\lambda\right)$ for each mobile device, where $\varsigma$ is the accuracy requirement. Therefore, Algorithm \ref{Lagrangianmtd} will terminate within a finite number of evaluations for functions $\{\frac{\partial \mathcal{L}\left(\bm{\alpha}\left(t\right),\lambda\left(t\right)\right)}{\partial \alpha_{i}\left(t\right)}\}_{i\in\tilde{\mathcal{N}}^{\rm{c}}\left(t\right)}$, which is given by $|\tilde{\mathcal{N}}^{\rm{c}}\left(t\right)|\log_{2}\left(\frac{1}{\varsigma}\right)\log_{2}\left(\frac{\lambda_{U}\left(t\right)-\lambda_{L}\left(t\right)}{\lambda_{\xi}\left(t\right)}\right)$.
\end{rmk}
}

\begin{algorithm}[h]
\caption{Lagrangian Method for $\mathbf{P}_{\rm{BW}}$}
\begin{algorithmic}[1]
\STATE Set $\xi\!=\!10^{-7}$, $\tilde{\lambda}_{L}\!=\!\lambda_{L}\left(t\right)$, $\tilde{\lambda}_{U}=\lambda_{U}\left(t\right)$, $l=0$, $I_{\max}=200$, $\epsilon_{A}=10^{-4}$, $\alpha_{i}\left(t\right)=\epsilon_{A},i\in \tilde{\mathcal{N}}^{\rm{c}}\left(t\right)$.
\STATE \textbf{While} {$|\sum_{i\in\tilde{\mathcal{N}}^{\rm{c}}\left(t\right)}\alpha_{i}\left(t\right)-\left(1-|\tilde{\mathcal{N}}\left(t\right)|\cdot \epsilon_{A}\right)|\geq   \xi$} and $l\leq I_{\max}$ \textbf{do}
\STATE \hspace{10pt} $\tilde{\lambda}=\frac{1}{2}\left(\tilde{\lambda}_{L}+\tilde{\lambda}_{U}\right)$ and $l=l+1$.
\STATE \hspace{10pt} Set $\alpha_{i}\left(t\right)= \max\{\epsilon_{A},\mathcal{R}_{i}\left(\tilde{\lambda}\right)\}$, $i\in\tilde{\mathcal{N}}^{\rm{c}}\left(t\right)$.
\STATE \hspace{10pt} \textbf{If} {$\sum_{i\in\tilde{\mathcal{N}}^{\rm{c}}\left(t\right)}\alpha_{i}\left(t\right)> 1-|\tilde{\mathcal{N}}\left(t\right)|\cdot \epsilon_{A}$} \textbf{then}
\STATE \hspace{20pt} $\tilde{\lambda}_{L}=\tilde{\lambda}$.
\STATE \hspace{10pt} \textbf{Else}
\STATE \hspace{20pt} $\tilde{\lambda}_{U}=\tilde{\lambda}$.
\STATE \hspace{10pt} \textbf{Endif}
\STATE \textbf{Endwhile}
\end{algorithmic}
\label{Lagrangianmtd}
\end{algorithm}

\textbf{3) Optimal CPU-Cycle Frequencies And Scheduling At The MEC Server:} After decoupling $\mathbf{f}\left(t\right)$, $\mathbf{p}_{\rm{tx}}\left(t\right)$ and $\bm{\alpha}\left(t\right)$ from $\mathbf{P}_{\rm{PTS}}$, we find that the optimal CPU-cycle frequencies for the $M$ CPU cores at the MEC server $\mathbf{f}^{\star}_{C}\left(t\right)$ and the optimal scheduling decision $\mathbf{D}^{\star}_{s}\left(t\right)$ can be obtained by solving the following sub-problem:
\begin{equation}
\begin{split}
&\mathbf{SP_{3}:}\min_{\mathbf{f}_{C}\left(t\right),\mathbf{D}_{s}\left(t\right)}\ -\sum_{i\in\mathcal{N}}T_{i}\left(t\right)D_{s,i}\left(t\right)+V \cdot w_{N+1}\sum_{m\in\mathcal{M}}\kappa_{{\rm{ser}},m}f^{3}_{C,m}\left(t\right)\\
&\ \ \ \ \ \ \ \ \ \ {\rm{s.t.}}\ \ \ \ \ 0\leq f_{C,m}\left(t\right)\leq f_{C_{m},\max},m\in\mathcal{M}\\
&\ \ \ \ \ \ \ \ \ \ \ \ \ \ \ \ \ \ \sum_{n\in\mathcal{N}}D_{s,n}\left(t\right)L_{n}\leq \sum_{m\in\mathcal{M}}f_{C,m}\left(t\right)\tau,D_{s,i}\left(t\right)\geq 0,i\in\mathcal{N}.
\end{split}
\end{equation}

It is not difficult to verify that $\mathbf{SP_{3}}$ is a convex problem, and thus standard convex algorithms can be used. Interestingly, by identifying an important structure of the optimal solution as shown in the following lemma, we are able to solve $\mathbf{SP_{3}}$ in closed form.

\begin{lma}
For any feasible $\mathbf{f}_{C}\left(t\right)$, there exists an optimal solution for $\mathbf{SP_{3}}$ where at most one mobile device is scheduled, i.e., $|\mathbf{D}_{s}\left(t\right)|_{0}\leq 1$, and the mobile device being scheduled is the one with the highest value of $T_{i}\left(t\right)L_{i}^{-1}$,\footnote{$|\mathbf{x}|_{0}$ denotes the number of non-zero elements in the vector $\mathbf{x}$. When there are multiple devices with the same maximum value of $T_{i}\left(t\right)L_{i}^{-1}$, we schedule the one with the smallest device index, which preserves optimality. Therefore, we assume the mobile device with the highest value of $T_{i}\left(t\right)L_{i}^{-1}$ is unique in the following for ease of presentation.} whose device index is denoted as $i_{\mathcal{N}}^{\max}$.
\label{propertySP3}
\end{lma}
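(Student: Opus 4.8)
The plan is to exploit the fact that, for a \emph{fixed} feasible CPU-cycle frequency vector $\mathbf{f}_{C}\left(t\right)$, the penalty term $V\cdot w_{N+1}\sum_{m\in\mathcal{M}}\kappa_{{\rm{ser}},m}f_{C,m}^{3}\left(t\right)$ in the objective of $\mathbf{SP_{3}}$ is a constant, so $\mathbf{SP_{3}}$ reduces to the linear program of choosing $\mathbf{D}_{s}\left(t\right)\geq\mathbf{0}$ to maximize $\sum_{i\in\mathcal{N}}T_{i}\left(t\right)D_{s,i}\left(t\right)$ subject to the single ``budget'' constraint $\sum_{n\in\mathcal{N}}D_{s,n}\left(t\right)L_{n}\leq B\left(t\right)$, where $B\left(t\right)\triangleq\tau\sum_{m\in\mathcal{M}}f_{C,m}\left(t\right)$. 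First I would introduce the change of variables $x_{i}\triangleq L_{i}D_{s,i}\left(t\right)\geq 0$, under which this LP becomes $\max\{\sum_{i\in\mathcal{N}}\left(T_{i}\left(t\right)L_{i}^{-1}\right)x_{i}:x_{i}\geq0,\ \sum_{i\in\mathcal{N}}x_{i}\leq B\left(t\right)\}$, i.e., the maximization of a nonnegative linear functional over a scaled simplex.

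Next I would invoke the standard greedy/extreme-point argument for such an LP: since every coefficient satisfies $T_{i}\left(t\right)L_{i}^{-1}\leq T_{i_{\mathcal{N}}^{\max}}\left(t\right)L_{i_{\mathcal{N}}^{\max}}^{-1}$ by definition of $i_{\mathcal{N}}^{\max}$, for any feasible $\{x_{i}\}$ we have $\sum_{i}\left(T_{i}\left(t\right)L_{i}^{-1}\right)x_{i}\leq\left(T_{i_{\mathcal{N}}^{\max}}\left(t\right)L_{i_{\mathcal{N}}^{\max}}^{-1}\right)\sum_{i}x_{i}$, and the right-hand side is exactly the value attained by moving all the mass $\sum_{i}x_{i}$ onto coordinate $i_{\mathcal{N}}^{\max}$. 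Hence the feasible point $\hat{x}_{i_{\mathcal{N}}^{\max}}=\sum_{i}x_{i}$, $\hat{x}_{i}=0$ for $i\neq i_{\mathcal{N}}^{\max}$, is optimal, which translates back into a scheduling decision with $|\mathbf{D}_{s}\left(t\right)|_{0}\leq1$ in which the unique scheduled device (if any) is $i_{\mathcal{N}}^{\max}$. Concretely, starting from an arbitrary optimal $\mathbf{D}_{s}^{\star}\left(t\right)$ of $\mathbf{SP_{3}}$ paired with the given $\mathbf{f}_{C}\left(t\right)$, replacing it by $\hat{D}_{s,i_{\mathcal{N}}^{\max}}\left(t\right)=L_{i_{\mathcal{N}}^{\max}}^{-1}\sum_{n}D_{s,n}^{\star}\left(t\right)L_{n}$ and $\hat{D}_{s,i}\left(t\right)=0$ otherwise keeps feasibility (the budget usage is unchanged, so no constraint involving $\mathbf{f}_{C}\left(t\right)$ is affected) and does not increase the objective, so $(\hat{\mathbf{D}}_{s}\left(t\right),\mathbf{f}_{C}\left(t\right))$ is still optimal. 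Because this construction works for \emph{every} feasible $\mathbf{f}_{C}\left(t\right)$, in particular for the $\mathbf{f}_{C}^{\star}\left(t\right)$ of a globally optimal solution, the claimed optimal solution of $\mathbf{SP_{3}}$ exists.

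There is no substantive obstacle here; the only mild points to handle carefully are (i) the degenerate case in which $T_{i}\left(t\right)=0$ for all $i\in\mathcal{N}$ (or $B\left(t\right)=0$), where scheduling nothing is optimal and $|\mathbf{D}_{s}\left(t\right)|_{0}\leq1$ holds trivially, and (ii) ties in $\arg\max_{i}T_{i}\left(t\right)L_{i}^{-1}$, which are resolved by the smallest-index convention already adopted in the footnote and which preserves optimality by the same inequality. The payoff of the lemma is that it collapses the scheduling variables of $\mathbf{SP_{3}}$ to the single scalar $D_{s,i_{\mathcal{N}}^{\max}}\left(t\right)$, reducing $\mathbf{SP_{3}}$ to a low-dimensional convex problem in $\mathbf{f}_{C}\left(t\right)$ that can subsequently be solved in closed form.
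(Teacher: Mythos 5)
Your proof is correct and follows essentially the same reallocation argument as the paper: for fixed $\mathbf{f}_{C}\left(t\right)$ the problem is linear in $\mathbf{D}_{s}\left(t\right)$, and transferring all scheduled CPU cycles onto the device maximizing $T_{i}\left(t\right)L_{i}^{-1}$ cannot worsen the objective. The paper carries out this transfer in two stages (first concentrating the support onto $i_{\mathcal{S}}^{\max}$, then moving to $i_{\mathcal{N}}^{\max}$), whereas your change of variables $x_{i}=L_{i}D_{s,i}\left(t\right)$ packages it as a single simplex-vertex argument; the substance is identical.
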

\begin{proof}
Please refer to Appendix B.
\end{proof}

Based on Lemma \ref{propertySP3}, we have $D^{\star}_{s,i}\left(t\right)=0,i\neq i_{\mathcal{N}}^{\max}$. Therefore, we can simplify $\mathbf{SP_{3}}$ as
\begin{equation}
\begin{split}
&\mathbf{SP'_{3}:}\min_{\mathbf{f}_{C}\left(t\right),D_{s,i_{\mathcal{N}}^{\max}}\left(t\right)}\ -T_{i_{\mathcal{N}}^{\max}}\left(t\right)D_{s,i_{\mathcal{N}}^{\max}}\left(t\right)+V\cdot w_{N+1}\sum_{m\in\mathcal{M}}\kappa_{{\rm{ser}},m}f^{3}_{C,m}\left(t\right)\\
&\ \ \ \ \ \ \ \ \ \ \ \ {\rm{s.t.}}\ \ \ \ \ 0\leq f_{C,m}\left(t\right)\leq f_{C_{m},\max},m\in\mathcal{M}\\
&\ \ \ \ \ \ \ \ \ \ \ \ \ \ \ \ \ \ \ \ D_{s,i_{\mathcal{N}}^{\max}}\left(t\right)L_{i_{\mathcal{N}}^{\max}}\leq \sum_{m\in\mathcal{M}}f_{C,m}\left(t\right)\tau,D_{s,i_{\mathcal{N}}^{\max}}\left(t\right)\geq 0,
\end{split}
\end{equation}
and its optimal solution is given by the following corollary.
\begin{corol}
The optimal solution for $\mathbf{SP'_{3}}$ (also for $\mathbf{SP_{3}}$) is given as
\begin{equation}
f^{\star}_{C,m}\left(t\right)=
\begin{cases}
\min\bigg\{f_{C_{m},\max},\sqrt{\frac{T_{i_{\mathcal{N}}^{\max}}\left(t\right)\tau}{3Vw_{N+1}L_{i^{\max}_{\mathcal{N}}}\kappa_{{\rm{ser}},m}}}\bigg\}, &w_{N+1}>0\\
f_{C_{m},\max}, &w_{N+1}=0
\end{cases},m\in\mathcal{M}.
\label{mcorefreq}
\end{equation}
and $D^{\star}_{s,i_{\mathcal{N}}^{\max}}\left(t\right)=L^{-1}_{i^{\max}_{\mathcal{N}}}\sum_{m\in\mathcal{M}}f^{\star}_{C,m}\left(t\right)\tau$ and $D_{s,i}^{\star}\left(t\right)=0,i\neq i_{\mathcal{N}}^{\max}$.
\label{solutionSP3pi}
\end{corol}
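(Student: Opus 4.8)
The plan is to exploit the monotone dependence of $\mathbf{SP'_{3}}$ on the scheduling variable and then its separability across the CPU cores. First I would observe that, since $T_{i_{\mathcal{N}}^{\max}}\left(t\right)\geq 0$, the objective of $\mathbf{SP'_{3}}$ is non-increasing in $D_{s,i_{\mathcal{N}}^{\max}}\left(t\right)$; hence, given any feasible $\mathbf{f}_{C}\left(t\right)$, raising $D_{s,i_{\mathcal{N}}^{\max}}\left(t\right)$ up to $L_{i_{\mathcal{N}}^{\max}}^{-1}\sum_{m\in\mathcal{M}}f_{C,m}\left(t\right)\tau$ cannot increase the objective, so there is an optimal solution in which the coupling constraint $D_{s,i_{\mathcal{N}}^{\max}}\left(t\right)L_{i_{\mathcal{N}}^{\max}}\leq\sum_{m\in\mathcal{M}}f_{C,m}\left(t\right)\tau$ is active. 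Substituting this equality back into the objective eliminates $D_{s,i_{\mathcal{N}}^{\max}}\left(t\right)$ and reduces $\mathbf{SP'_{3}}$ to the minimization of $\sum_{m\in\mathcal{M}}\bigl[-T_{i_{\mathcal{N}}^{\max}}\left(t\right)L_{i_{\mathcal{N}}^{\max}}^{-1}\tau f_{C,m}\left(t\right)+Vw_{N+1}\kappa_{{\rm{ser}},m}f_{C,m}^{3}\left(t\right)\bigr]$ over the box $0\leq f_{C,m}\left(t\right)\leq f_{C_{m},\max}$, $m\in\mathcal{M}$.

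Next I would note that this reduced problem decouples completely into $M$ independent scalar problems, one per core. For $w_{N+1}>0$, each scalar objective $g_{m}\left(f\right)\triangleq -T_{i_{\mathcal{N}}^{\max}}\left(t\right)L_{i_{\mathcal{N}}^{\max}}^{-1}\tau f+Vw_{N+1}\kappa_{{\rm{ser}},m}f^{3}$ is strictly convex on $\left[0,+\infty\right)$ with derivative $g_{m}'\left(f\right)=-T_{i_{\mathcal{N}}^{\max}}\left(t\right)L_{i_{\mathcal{N}}^{\max}}^{-1}\tau+3Vw_{N+1}\kappa_{{\rm{ser}},m}f^{2}$, so $g_{m}$ decreases up to its unique stationary point $\hat{f}_{m}=\sqrt{T_{i_{\mathcal{N}}^{\max}}\left(t\right)\tau/(3Vw_{N+1}L_{i_{\mathcal{N}}^{\max}}\kappa_{{\rm{ser}},m})}$ and increases afterwards; the constrained minimizer over $\left[0,f_{C_{m},\max}\right]$ is therefore the projection $\min\{f_{C_{m},\max},\hat{f}_{m}\}$, which also covers $T_{i_{\mathcal{N}}^{\max}}\left(t\right)=0$ (where $\hat{f}_{m}=0$). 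For $w_{N+1}=0$, the objective collapses to the linear term $-T_{i_{\mathcal{N}}^{\max}}\left(t\right)L_{i_{\mathcal{N}}^{\max}}^{-1}\tau\sum_{m}f_{C,m}\left(t\right)$, minimized by $f_{C,m}^{\star}\left(t\right)=f_{C_{m},\max}$ for all $m$ (uniquely when $T_{i_{\mathcal{N}}^{\max}}\left(t\right)>0$, and without loss of optimality otherwise). Putting the two cases together gives exactly (\ref{mcorefreq}).

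Finally I would recover $D_{s,i_{\mathcal{N}}^{\max}}^{\star}\left(t\right)=L_{i_{\mathcal{N}}^{\max}}^{-1}\sum_{m\in\mathcal{M}}f_{C,m}^{\star}\left(t\right)\tau$ from the active coupling constraint, and invoke Lemma \ref{propertySP3} to set $D_{s,i}^{\star}\left(t\right)=0$ for $i\neq i_{\mathcal{N}}^{\max}$, so that this solution of $\mathbf{SP'_{3}}$ is feasible and optimal for $\mathbf{SP_{3}}$ as well. I do not expect a real obstacle — the argument is just convexity together with separability of the reduced problem — but the one point requiring care is the activeness of the coupling constraint when $T_{i_{\mathcal{N}}^{\max}}\left(t\right)=0$: in that degenerate case the objective no longer depends on $D_{s,i_{\mathcal{N}}^{\max}}\left(t\right)$, so optimality does not force the constraint to bind, and the claim should be stated as the existence of an optimal solution of the displayed form (consistent with the phrasing of Lemma \ref{propertySP3}).
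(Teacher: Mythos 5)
Your proposal is correct and follows essentially the same route as the paper's proof: argue from $T_{i_{\mathcal{N}}^{\max}}\left(t\right)\geq 0$ that the coupling constraint can be taken active, substitute $D_{s,i_{\mathcal{N}}^{\max}}\left(t\right)=L^{-1}_{i^{\max}_{\mathcal{N}}}\sum_{m\in\mathcal{M}}f_{C,m}\left(t\right)\tau$ into the objective, solve the resulting decoupled per-core convex problems by projecting the stationary point onto $\left[0,f_{C_{m},\max}\right]$, and invoke Lemma \ref{propertySP3} for the remaining devices. Your added remark on the degenerate case $T_{i_{\mathcal{N}}^{\max}}\left(t\right)=0$ is a sensible clarification but does not change the argument.
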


\begin{proof}
Since $T_{i_{\mathcal{N}}^{\max}}\left(t\right) \geq 0$, there exists an optimal solution for $\mathbf{SP'_{3}}$ such that $D_{s,i_{\mathcal{N}}^{\max}}\left(t\right)L_{i_{\mathcal{N}}^{\max}}= \sum_{m\in\mathcal{M}}f_{C,m}\left(t\right)\tau$. Thus, by substituting $D_{s,i_{\mathcal{N}}^{\max}}\left(t\right)=L^{-1}_{i^{\max}_{
\mathcal{N}}}\sum_{m\in\mathcal{M}}f_{C,m}\left(t\right)\tau$ into the objective function, we find that $f^{\star}_{C,m}\left(t\right)$ is the optimal solution for
\begin{equation}
\min_{0\leq f_{C,m}\left(t\right)\leq f_{C_{m},\max}}-T_{i_{\mathcal{N}}^{\max}}\left(t\right)L^{-1}_{i^{\max}_{\mathcal{N}}}f_{C,m}\left(t\right)\tau + V\cdot w_{N+1}\kappa_{{\rm{ser}},m}f_{C,m}^{3}
\left(t\right),
\end{equation}
which is given by (\ref{mcorefreq}). Together with Lemma \ref{propertySP3}, we complete the proof.
\end{proof}

\begin{rmk}
One main benefit of the proposed online algorithm is that it does not require prior information on the computation task arrival and wireless channel fading processes, which makes it also applicable for unpredictable environments. Besides, the proposed algorithm is of low complexity, as at each time slot, the optimal CPU-cycle frequencies of the local and server CPUs, as well as the MEC server scheduling decision, are obtained in closed forms, while the computation offloading policy is determined by an efficient alternating minimization algorithm. Furthermore, as will be shown in the next section, the achievable performance of the proposed algorithm can be analytically characterized, which helps to demonstrate its asymptotic optimality.
\label{benefitsalgorithm}
\end{rmk}

\subsection{A Delay-Improved Mechanism}

In Algorithm \ref{Algframework}, the system operation is determined by the optimal solution of the per-time slot problem $\mathbf{P}_{\rm{PTS}}$, which excludes the task buffer underflow constraint, i.e., some of the departure functions, either $D_{\Sigma,i}\left(t\right)$ or $D_{s,i}\left(t\right)$, may be greater than the actual amount of tasks in the corresponding buffers. In particular, as there is only one mobile device being scheduled by the MEC server at each time slot according to Lemma \ref{propertySP3} and Corollary \ref{solutionSP3pi}, part of the available CPU cycles at the MEC server may be wasted and the excessive computational resource can be re-allocated for other devices. Inspired by this observation, we propose a delay-improved mechanism for Algorithm \ref{Algframework} in this subsection.

In the proposed delay-improved mechanism, the decision center maintains a set of virtual task buffers $\bm{\Theta}_{vir}\left(t\right)$ with $\bm{\Theta}_{vir}\left(0\right)=\mathbf{0}$, and tracks the states of the actual task buffers $\bm{\Theta}_{act}\left(t\right)$. The knowledge of $\bm{\Theta}_{vir}\left(t\right)$ is used to compute the optimal solution of the per-time slot problem $\mathbf{X}^{\star}\left(t\right)$ according to Algorithm \ref{Algframework}. We modify the MEC server scheduling decision $\mathbf{D}^{\star}_{s}\left(t\right)$ in $\mathbf{X}^{\star}\left(t\right)$ as $\tilde{\mathbf{D}}_{s}\left(t\right)$ in the actual implementation according to the procedures shown in Algorithm \ref{Delayenhanced}.

\begin{algorithm}[th!]
\caption{The Delay-Improved Mechanism}
\begin{algorithmic}[1]
\STATE If $\sum_{m\in\mathcal{M}}f^{\star}_{C,m}\left(t\right)\tau\leq T_{i_{\mathcal{N}}^{\max},act}\left(t\right)L_{i_{\mathcal{N}}^{\max}}$, set $\tilde{\mathbf{D}}_{s}\left(t\right)=\mathbf{D}^{\star}_{s}\left(t\right)$ and exit, where $i_{\mathcal{N}}^{\max}=\arg\max_{i\in\mathcal{N}}T_{i,vir}\left(t\right)L_{i}^{-1}$; Otherwise, go to \textbf{Line 2}.
\STATE Sort the mobile devices according to the descending order of $T_{i,vir}\left(t\right)L_{i}^{-1}$ and let $\left[i\right]$ denote the mobile device with the $i$th highest value of $T_{i,vir}\left(t\right)L_{i}^{-1}$, i.e., $\left[1\right]=i_{\mathcal{N}}^{\max}$. (If there are multiple devices with the same values of $T_{i,vir}\left(t\right)L_{i}^{-1}$, sort them according to the ascending order of their device indices.)
\STATE If $\sum_{i\in\mathcal{N}}T_{i,act}\left(t\right)L_{i} > \sum_{m\in\mathcal{M}}f^{\star}_{C,m}\left(t\right)\tau$, determine the number of scheduled mobile devices as
$n_{\rm{ser}}\left(t\right)=
\min\big\{n|\sum_{i=1}^{n}T_{\left[i\right],act}\left(t\right)L_{\left[i\right]}>\sum_{m\in\mathcal{M}}f^{\star}_{C,m}\left(t\right)\tau\big\}$; Otherwise, $n_{\rm{ser}}\left(t\right)=N$.
\STATE Re-allocate the CPU cycles offered by the MEC server according to
\begin{equation}
\tilde{D}_{s,\left[n\right]}\left(t\right)=
\begin{cases}
T_{\left[n\right],act}\left(t\right), &n< n_{\rm{ser}}\left(t\right)\\
\left(\sum_{m\in\mathcal{M}}f^{\star}_{C,m}\left(t\right)\tau - \sum_{i<n_{\rm{ser}}\left(t\right)}T_{\left[i\right],act}\left(t\right)L_{\left[i\right]}\right)L^{-1}_{\left[n\right]}, & n= n_{\rm{ser}}\left(t\right)\\
0, &{\rm{otherwise}}.
\end{cases}
\nonumber
\end{equation}
\end{algorithmic}
\label{Delayenhanced}
\end{algorithm}

In the following proposition, we show the performance of the delay-improved mechanism, including the average weighted sum power consumption and the average execution delay, is no worse than that achieved by Algorithm \ref{Algframework}.

\begin{prop}
The power consumption of the mobile devices and the MEC server under the delay-improved mechanism, i.e., Algorithm \ref{Delayenhanced}, is the same as that under Algorithm \ref{Algframework}, while the average execution delay, i.e., the average sum queue length, is not increased.
\label{propmodialgo}
\end{prop}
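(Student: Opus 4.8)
Here is my plan of attack for Proposition~\ref{propmodialgo}.

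The plan is to prove both halves by the same coupling device: run Algorithm~\ref{Delayenhanced} and Algorithm~\ref{Algframework} on a common realization of $\{A_i(t)\}$ and $\{\Gamma_i(t)\}$, and show that the \emph{virtual} buffers $\bm\Theta_{vir}(t)$ kept by the delay-improved mechanism coincide, sample-path-wise, with the buffers $\bm\Theta(t)$ produced by Algorithm~\ref{Algframework}. This is a one-line induction on $t$: $\bm\Theta_{vir}(0)=\bm\Theta(0)=\mathbf 0$; and if $\bm\Theta_{vir}(t)=\bm\Theta(t)$, both algorithms feed the \emph{same} state into $\mathbf P_{\rm PTS}$ and hence compute the \emph{same} optimizer $\mathbf X^\star(t)=[\mathbf f^\star(t),\mathbf p^\star_{\rm tx}(t),\bm\alpha^\star(t),\mathbf f^\star_C(t),\mathbf D^\star_s(t)]$, after which the virtual buffers are updated by (\ref{bufferdynamics})--(\ref{bufferdynamicsMEC}) using $\mathbf D^\star_s(t)$ --- exactly the Algorithm~\ref{Algframework} update --- so $\bm\Theta_{vir}(t+1)=\bm\Theta(t+1)$.

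The power-consumption claim is then immediate: Algorithm~\ref{Delayenhanced} keeps $\mathbf f^\star(t)$, $\mathbf p^\star_{\rm tx}(t)$ and $\mathbf f^\star_C(t)$ exactly as computed and only overwrites the schedule $\mathbf D^\star_s(t)$ with $\tilde{\mathbf D}_s(t)$; since $P_\Sigma(t)$ depends on the decision only through $\mathbf p_{\rm tx}(t)$, $\mathbf f(t)$ and $\mathbf f_C(t)$ (via (\ref{Plocal}) and (\ref{pwrCPU})) and not through $\mathbf D_s(t)$, the per-slot weighted sum power is identical under the two algorithms whenever $\bm\Theta_{vir}(t)=\bm\Theta(t)$, which we just established; averaging gives equal $\overline P_\Sigma$.

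For the delay claim I would first note that nothing is touched on the mobile side: $D_{l,i}(t)$, $D_{r,i}(t)$ and $A_i(t)$ are identical in the two systems, so (\ref{bufferdynamics}) gives $Q_{i,act}(t)=Q_{i,vir}(t)$ for all $i,t$, and the inflow $\min\{\max\{Q_i(t)-D_{l,i}(t),0\},D_{r,i}(t)\}$ into each server buffer $T_i$ is common. It then remains to compare server backlogs. The structural fact I would extract from Algorithm~\ref{Delayenhanced} by a short case check over its four lines is that, in every slot, the total number of CPU cycles it actually drains from the server buffers equals $\min\{W_{act}(t),F(t)\}$, where $F(t)\triangleq\sum_{m}f^\star_{C,m}(t)\tau$ and $W_\bullet(t)\triangleq\sum_i T_{i,\bullet}(t)L_i$ --- i.e.\ it spends the whole server budget on genuine tasks unless the buffers run dry, $\tilde D_{s,[n]}(t)$ never exceeding $T_{[n],act}(t)$ except possibly on the single marginal device, where the excess is dummy and the effective departure is still $T_{[n],act}(t)$. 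Hence the actual workload obeys the recursion $W_{act}(t+1)=[W_{act}(t)-F(t)]^{+}+B(t)$ with $B(t)\triangleq\sum_i\min\{\max\{Q_i(t)-D_{l,i}(t),0\},D_{r,i}(t)\}L_i$ the common cycle-weighted arrival, whereas by Lemma~\ref{propertySP3} and Corollary~\ref{solutionSP3pi} Algorithm~\ref{Algframework} serves only $i_{\mathcal N}^{\max}$ and drains at most $\min\{W_{vir}(t),F(t)\}$ cycles, so $W_{vir}(t+1)\ge[W_{vir}(t)-F(t)]^{+}+B(t)$. Since $x\mapsto[x-F(t)]^{+}$ is non-decreasing and $W_{act}(0)=W_{vir}(0)=0$, an induction yields $W_{act}(t)\le W_{vir}(t)$ for all $t$.

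The main obstacle I anticipate is passing from this cycle-weighted domination to the unweighted backlog comparison $\sum_i T_{i,act}(t)\le\sum_i T_{i,vir}(t)$ that the execution-delay statement (via Little's law and the definition of $\overline q_{\Sigma,i}$) needs, because Line~4 empties the buffers of the devices with the \emph{largest} $T_{i,vir}(t)L_i^{-1}$ rather than just the one device Algorithm~\ref{Algframework} serves, and the per-device backlogs need not be ordered consistently across the two systems. My plan for this step is to record the buffers through cumulative flows --- since all $T_i$ start empty and the max-with-zero in (\ref{bufferdynamicsMEC}) is effectively inactive, $T_{i,\bullet}(t)=\mathcal I_i(t)-\mathcal O^{\bullet}_i(t)$ with $\mathcal I_i$ the common cumulative inflow and $\mathcal O^{\bullet}_i$ the cumulative effective departure, reducing the claim to $\sum_i\mathcal O^{act}_i(t)\ge\sum_i\mathcal O^{vir}_i(t)$ --- and to prove this by a strengthened induction carrying $W_{act}(t)\le W_{vir}(t)$ alongside $\sum_i T_{i,act}(t)\le\sum_i T_{i,vir}(t)$, splitting into the regimes of Algorithm~\ref{Delayenhanced} (Line~1; Line~3 with $W_{act}(t)\le F(t)$, where the actual server buffers are flushed entirely so $\sum_i T_{i,act}(t+1)=B'(t)$; and Line~3 with $W_{act}(t)>F(t)$) and, in the last and hardest regime, exploiting the decreasing virtual order of $T_{[n],vir}(t)L_{[n]}^{-1}$ together with $W_{act}(t)\le W_{vir}(t)$ to bound the residual actual backlog. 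Once $\sum_i T_{i,act}(t)\le\sum_i T_{i,vir}(t)$ is secured, adding $Q_{i,act}(t)=Q_{i,vir}(t)$ and taking time averages gives $\sum_i\overline q_{\Sigma,i}$ under Algorithm~\ref{Delayenhanced} no larger than under Algorithm~\ref{Algframework}, which with the power equality is the proposition.
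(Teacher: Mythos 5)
Your coupling setup and the power half are fine and match the paper: the virtual buffers reproduce Algorithm~\ref{Algframework}'s trajectory exactly, and since $P_{\Sigma}\left(t\right)$ depends on $\mathbf{X}\left(t\right)$ only through $\mathbf{f}\left(t\right)$, $\mathbf{p}_{\rm{tx}}\left(t\right)$ and $\mathbf{f}_{C}\left(t\right)$, overwriting $\mathbf{D}_{s}^{\star}\left(t\right)$ cannot change the power. The delay half, however, has a genuine gap that you yourself flag but do not close: you aggregate into the cycle-weighted workload $W\left(t\right)=\sum_{i}T_{i}\left(t\right)L_{i}$, obtain $W_{act}\left(t\right)\leq W_{vir}\left(t\right)$, and then need the \emph{unweighted} comparison $\sum_{i}T_{i,act}\left(t\right)\leq\sum_{i}T_{i,vir}\left(t\right)$, which does not follow when the $L_{i}$ differ. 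The ``strengthened induction'' you sketch carries only the two aggregate inequalities as hypotheses, and these are too weak to close the hardest regime: the states they admit include ones where some individual $T_{i,act}\left(t\right)$ exceeds $T_{i,vir}\left(t\right)$, and nothing in your plan rules out the actual system spending its $F\left(t\right)$ cycles on large-$L_{i}$ devices (removing few bits) while the virtual system removes $F\left(t\right)\slash L_{\left[1\right]}$ bits from a small-$L$ device. So the step you call ``the main obstacle'' is not a technicality; as written, the induction does not go through.

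The fix is to discard the aggregation and prove the stronger per-device inequality $T_{i,act}\left(t\right)\leq T_{i,vir}\left(t\right)$ for all $i$ directly by induction, which is exactly what the paper's (omitted) proof does and which closes cleanly by a case check over Algorithm~\ref{Delayenhanced}. If Line~1 fires, both systems apply the same $\mathbf{D}_{s}^{\star}\left(t\right)$ and the condition $\sum_{m}f_{C,m}^{\star}\left(t\right)\tau\leq T_{i_{\mathcal{N}}^{\max},act}\left(t\right)L_{i_{\mathcal{N}}^{\max}}$ plus monotonicity of $x\mapsto\max\left\{x-c,0\right\}$ gives the claim. Otherwise, devices $\left[n\right]$ with $n<n_{\rm{ser}}\left(t\right)$ are flushed to zero post-service, hence dominated by any nonnegative virtual residual; the marginal device $\left[n_{\rm{ser}}\left(t\right)\right]$ has post-service backlog at most $T_{\left[n_{\rm{ser}}\right],act}\left(t\right)\leq T_{\left[n_{\rm{ser}}\right],vir}\left(t\right)$, which equals the virtual residual unless $\left[n_{\rm{ser}}\right]=i_{\mathcal{N}}^{\max}$, in which case both sides subtract the same $F\left(t\right)\slash L_{i_{\mathcal{N}}^{\max}}$ and monotonicity again applies; unserved devices inherit the hypothesis since the virtual system serves only $i_{\mathcal{N}}^{\max}=\left[1\right]$. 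Summing the per-device inequality and adding $Q_{i,act}\left(t\right)=Q_{i,vir}\left(t\right)$ then yields the delay claim, with no weighted-to-unweighted conversion needed.
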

\begin{proof}
The property of the power consumption performance is straightforward, while the proof for the delay performance can be obtained by verifying $Q_{i,act}\left(t\right)=Q_{i,vir}\left(t\right)$ and $T_{i,act}\left(t\right)\leq T_{i,vir}\left(t\right),i\in\mathcal{N}, t\in\mathcal{T}$ via mathematical induction, which is omitted for brevity.
\end{proof}

It is worthwhile to emphasize that the proposed delay-improved mechanism is able to secure the benefits of Algorithm \ref{Algframework} as mentioned in Remark \ref{benefitsalgorithm}, while enhancing the average execution delay performance without incurring extra power consumption.

\section{Performance Analysis}
In this section, we will provide the main theoretical results of this paper, which characterize the upper bounds for the average weighted sum power consumption of the mobile devices and the MEC server, as well as the average sum queue length of the task buffers at the mobile and server sides. Besides, the tradeoff between the average weighted sum power consumption and average execution delay will also be revealed.

To facilitate the analysis, we first define an auxiliary problem $\mathbf{P_{3}}$ with the same objective function and constraints as $\mathbf{P_{2}}$, but the task buffer dynamics at the MEC server is modified as
\begin{equation}
T_{i}\left(t+1\right)=\max\{T_{i}\left(t\right)-D_{s,i}\left(t\right),0\}+D_{r,i}\left(t\right),i\in\mathcal{N}.
\end{equation}

In the following lemma, we demonstrate the relationship between $\mathbf{P_{2}}$ and $\mathbf{P_{3}}$ in terms of their feasibility and optimal values of the objective functions.

\begin{lma}
$\mathbf{P_{2}}$ is feasible if and only if $\mathbf{P_{3}}$ is feasible, and the optimal values of $\mathbf{P_{2}}$ and $\mathbf{P_{3}}$, denoted as $P_{\Sigma,\mathbf{P_{2}}}^{\rm{opt}}$ and $P_{\Sigma,\mathbf{P_{3}}}^{\rm{opt}}$, respectively, are equal, i.e., $P_{\Sigma,\mathbf{P_{2}}}^{\rm{opt}}=P_{\Sigma,\mathbf{P_{3}}}^{\rm{opt}}$.
\label{equivalenceP2P3}
\end{lma}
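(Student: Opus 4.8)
The plan is to prove both assertions by comparing feasible control sequences for $\mathbf{P_2}$ and $\mathbf{P_3}$ on a sample-path basis, exploiting the single structural difference between the two problems: in $\mathbf{P_3}$ the arrival into the $i$th server buffer in slot $t$ is $D_{r,i}\left(t\right)$, whereas in $\mathbf{P_2}$ (see (\ref{bufferdynamicsMEC})) it is the pointwise-smaller quantity $\min\{\max\{Q_{i}\left(t\right)-D_{l,i}\left(t\right),0\},D_{r,i}\left(t\right)\}$. Since the objective $\overline{P}_{\Sigma}$ in (\ref{netwpwr}) depends only on $\{\mathbf{X}\left(t\right)\}$ and not on the queue trajectories, and all constraints except the $T_{i}$-dynamics coincide, it suffices to map each problem's feasible control sequences to feasible control sequences for the other without increasing the objective; this simultaneously yields the feasibility equivalence and the equality of optimal values.

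\textbf{Direction 1 ($\mathbf{P_3}$ feasible $\Rightarrow$ $\mathbf{P_2}$ feasible, $P_{\Sigma,\mathbf{P_2}}^{\rm{opt}}\le P_{\Sigma,\mathbf{P_3}}^{\rm{opt}}$).} Apply any feasible $\{\mathbf{X}\left(t\right)\}$ for $\mathbf{P_3}$ verbatim to $\mathbf{P_2}$. The $Q_{i}$-dynamics in (\ref{bufferdynamics}) are identical in the two problems, so $\{Q_{i}\left(t\right)\}$ is unchanged; for the server buffers I would show by induction on $t$ (base case $Q_{i}\left(0\right)=T_{i}\left(0\right)=0$) that $T_{i}^{\mathbf{P_2}}\left(t\right)\le T_{i}^{\mathbf{P_3}}\left(t\right)$ for all $i,t$, using that $x\mapsto\max\{x-D_{s,i}\left(t\right),0\}$ is nondecreasing and that the arrival term in $\mathbf{P_2}$ is at most $D_{r,i}\left(t\right)$. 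Mean rate stability of $\{T_{i}^{\mathbf{P_3}}\left(t\right)\}$ and nonnegativity then force mean rate stability of $\{T_{i}^{\mathbf{P_2}}\left(t\right)\}$, so the policy is feasible for $\mathbf{P_2}$ with the same objective value, and taking the infimum over feasible policies gives $P_{\Sigma,\mathbf{P_2}}^{\rm{opt}}\le P_{\Sigma,\mathbf{P_3}}^{\rm{opt}}$.

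\textbf{Direction 2 ($\mathbf{P_2}$ feasible $\Rightarrow$ $\mathbf{P_3}$ feasible, $P_{\Sigma,\mathbf{P_3}}^{\rm{opt}}\le P_{\Sigma,\mathbf{P_2}}^{\rm{opt}}$).} Given a feasible $\{\mathbf{X}\left(t\right)\}$ for $\mathbf{P_2}$, I would construct a modified control for $\mathbf{P_3}$ that leaves everything unchanged except the transmit powers, replacing $p_{{\rm{tx}},i}\left(t\right)$ by $\tilde p_{{\rm{tx}},i}\left(t\right)\in[0,p_{{\rm{tx}},i}\left(t\right)]$ chosen so that the new offloading amount equals $\tilde D_{r,i}\left(t\right)=\min\{\max\{Q_{i}\left(t\right)-D_{l,i}\left(t\right),0\},D_{r,i}\left(t\right)\}$; such $\tilde p_{{\rm{tx}},i}\left(t\right)$ exists by the intermediate value theorem, since $D_{r,i}\left(t\right)$ in (\ref{Dremote}) is continuous and nondecreasing in $p_{{\rm{tx}},i}\left(t\right)$ on $[0,p_{i,\max}]$ (as $\mathbf{P_3}$, like $\mathbf{P_2}$, enforces $\alpha_{i}\left(t\right)\ge\epsilon_{A}>0$) and vanishes at $p_{{\rm{tx}},i}=0$. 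I would then verify: (i) the $Q_{i}$-trajectory is unchanged --- in the only nontrivial case $D_{r,i}\left(t\right)>\max\{Q_{i}\left(t\right)-D_{l,i}\left(t\right),0\}$ one has $\tilde D_{\Sigma,i}\left(t\right)=D_{l,i}\left(t\right)+\max\{Q_{i}\left(t\right)-D_{l,i}\left(t\right),0\}=\max\{Q_{i}\left(t\right),D_{l,i}\left(t\right)\}\ge Q_{i}\left(t\right)$, so both the original and modified departures empty the local buffer and (\ref{bufferdynamics}) yields the same $Q_{i}\left(t+1\right)$; and (ii) with $\{Q_{i}\left(t\right)\}$ thus preserved, the arrival into $T_{i}$ under the modified control in $\mathbf{P_3}$ is exactly $\tilde D_{r,i}\left(t\right)=\min\{\max\{Q_{i}\left(t\right)-D_{l,i}\left(t\right),0\},D_{r,i}\left(t\right)\}$, which equals the $\mathbf{P_2}$ arrival in (\ref{bufferdynamicsMEC}) under the original control, so by induction the server buffers coincide and inherit mean rate stability. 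Hence the modified control is feasible for $\mathbf{P_3}$; since $\tilde p_{{\rm{tx}},i}\left(t\right)\le p_{{\rm{tx}},i}\left(t\right)$ and all other terms in $P_{\Sigma}\left(t\right)$ are unchanged, its objective is no larger, giving $P_{\Sigma,\mathbf{P_3}}^{\rm{opt}}\le P_{\Sigma,\mathbf{P_2}}^{\rm{opt}}$. Combining the two directions proves feasibility equivalence and $P_{\Sigma,\mathbf{P_2}}^{\rm{opt}}=P_{\Sigma,\mathbf{P_3}}^{\rm{opt}}$.

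The main obstacle I expect is the bookkeeping in steps (i)--(ii): one must track the nested $\max$ operators carefully to be sure that trimming the offloaded bits down to the ``useful'' amount alters neither the local-buffer evolution nor the amount actually deposited into the server buffer, and that it is precisely the induction hypothesis $Q_{i}^{\rm mod}\left(t\right)=Q_{i}^{\rm orig}\left(t\right)$ that makes the two server-buffer recursions identical term by term. The continuity/IVT argument producing $\tilde p_{{\rm{tx}},i}\left(t\right)$ is routine but worth stating explicitly, as it is where the restriction $\alpha_{i}\left(t\right)\ge\epsilon_{A}$ (inherited from $\mathbf{P_2}$ in the definition of $\mathbf{P_3}$) is used to guarantee that $D_{r,i}$ is a genuine continuous function of $p_{{\rm{tx}},i}$.
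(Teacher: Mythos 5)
Your proposal is correct and follows essentially the same two-direction, sample-path argument as the paper's proof: applying a $\mathbf{P_3}$-feasible policy verbatim to $\mathbf{P_2}$ and showing $T_{i}^{\mathbf{P_2}}\left(t\right)\leq T_{i}^{\mathbf{P_3}}\left(t\right)$ by induction for one inequality, and trimming the offloaded amount to $\min\{\max\{Q_{i}\left(t\right)-D_{l,i}\left(t\right),0\},D_{r,i}\left(t\right)\}$ (with correspondingly reduced transmit power) for the other. Your explicit intermediate-value-theorem construction of $\tilde p_{{\rm{tx}},i}\left(t\right)$ and the case analysis verifying that the $Q$-trajectory is preserved merely fill in details that the paper's proof delegates to "mathematical induction."
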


\begin{proof}
We first show that if $\mathbf{P_{3}}$ is feasible, then $\mathbf{P_{2}}$ is also feasible.\footnote{As the instantaneous constraints, i.e., (\ref{freqtxconstraint})-(\ref{MECserverSchedulingP1}) and $\bm{\alpha}\left(t\right)\in \tilde{\mathcal{A}},t\in\mathcal{T}$, are the same for both problems, it is sufficient to verify the mean rate stability of the task buffers in order to show the feasibility of $\mathbf{P_{2}}$.} Denote the system operation under the optimal solution for $\mathbf{P_{3}}$ as $\{\mathbf{X}^{3\star}\left(t\right)\}$, and the queue lengths induced by this solution in $\mathbf{P_{3}}$ as $\bm{\Theta}^{{3}\star}_{\mathbf{P_{3}}}\left(t\right)=\left[\mathbf{Q}^{{3}\star}_{\mathbf{P_{3}}}\left(t\right),\mathbf{T}^{{3}\star}_{\mathbf{P_{3}}}\left(t\right)\right]$. By applying $\{\mathbf{X}^{3\star}\left(t\right)\}$ to $\mathbf{P_{2}}$, we can easily show that the queue lengths induced by $\{\mathbf{X}^{3\star}\left(t\right)\}$ in $\mathbf{P_{2}}$, $\bm{\Theta}^{3\star}_{\mathbf{P_{2}}}\left(t\right)=\left[\mathbf{Q}^{{3}\star}_{\mathbf{P_{2}}}\left(t\right),\mathbf{T}^{{3}\star}_{\mathbf{P_{2}}}\left(t\right)\right]$, satisfies $Q_{\mathbf{P_{2}},i}^{3\star}\left(t\right)=Q_{\mathbf{P_{3}},i}^{3\star}\left(t\right)$ and $T_{\mathbf{P_{2}},i}^{3\star}\left(t\right)\leq T_{\mathbf{P_{3}},i}^{3\star}\left(t\right),i\in\mathcal{N},t\in\mathcal{T}$, i.e., $\mathbf{P_{2}}$ is also feasible. Besides, the average weighted sum power consumption under $\{\mathbf{X}^{{3}\star}\left(t\right)\}$ in $\mathbf{P_{2}}$ equals $P_{\Sigma,\mathbf{P_{3}}}^{\rm{opt}}$, i.e., $P_{\Sigma,\mathbf{P_{2}}}^{\rm{opt}} \leq P_{\Sigma,\mathbf{P_{3}}}^{\rm{opt}}$.

Next, suppose $\mathbf{P_{2}}$ is feasible and its optimal solution is given by $\{\mathbf{X}^{2\star}\left(t\right)\}$, which can be utilized to construct a solution $\{\tilde{\mathbf{X}}\left(t\right)\}$ for $\mathbf{P_{3}}$ by modifying $\{D^{2\star}_{r,i}\left(t\right)\}$ as
\begin{equation}
\tilde{D}_{r,i}\left(t\right)=\min\{D^{2\star}_{r,i}\left(t\right),\max\{\tilde{Q}_{\mathbf{P_{3}},i}\left(t\right)-D^{2\star}_{l,i}\left(t\right),0\}\}\leq D^{2\star}_{r,i}\left(t\right),
\end{equation}
where $\tilde{\bm{\Theta}}_{\mathbf{P_{3}}}\left(t\right)=\left[\tilde{\mathbf{Q}}_{\mathbf{P_{3}}}\left(t\right), \tilde{\mathbf{T}}_{\mathbf{P_{3}}}\left(t\right)\right]$ denotes the queue lengths of the task buffers in $\mathbf{P_{3}}$ induced by $\{\tilde{\mathbf{X}}\left(t\right)\}$. By mathematical induction, we can show that $\tilde{\mathbf{Q}}_{\mathbf{P_{3}}}\left(t\right)=\mathbf{Q}^{2\star}_{\mathbf{P_{2}}}\left(t\right)$ and $\tilde{\mathbf{T}}_{\mathbf{P_{3}}}\left(t\right)=\mathbf{T}^{2\star}_{\mathbf{P_{2}}}\left(t\right)$, where $\bm{\Theta}^{2\star}_{\mathbf{P_{2}}}\left(t\right)=\left[\mathbf{Q}^{2\star}_{\mathbf{P_{2}}}\left(t\right), \mathbf{T}^{2\star}_{\mathbf{P_{2}}}\left(t\right)\right]$ denotes the queue lengths in $\mathbf{P_{2}}$ under $\{\mathbf{X}^{2\star}\left(t\right)\}$. Thus, $\mathbf{P_{3}}$ is also feasible as the transmit powers at the mobile devices for realizing $\tilde{D}_{r,i}\left(t\right)$ are decreased since $\tilde{D}_{r,i}\left(t\right)\leq D^{2\star}_{r,i}\left(t\right)$, i.e., $P_{\Sigma,\mathbf{P_{3}}}^{\rm{opt}}\leq\overline{P}^{\tilde{\mathbf{X}}}_{\Sigma,\mathbf{P_{3}}}\leq P_{\Sigma,\mathbf{P_{2}}}^{\rm{opt}}$, where $\overline{P}^{\tilde{\mathbf{X}}}_{\Sigma,\mathbf{P_{3}}}$ is the value of the objective function in $\mathbf{P_{3}}$ under $\tilde{\mathbf{X}}\left(t\right)$. As a result, we have $P_{\Sigma,\mathbf{P_{2}}}^{\rm{opt}}=P_{\Sigma,\mathbf{P_{3}}}^{\rm{opt}}$.
\end{proof}

In the following lemma, we show there exists a stationary and randomized policy, in which the system operations are i.i.d. among different time slots, that behaves arbitrarily close to the optimal solution for $\mathbf{P_{3}}$.
\begin{lma}
If $\mathbf{P_{3}}$ is feasible, then for any $\delta>0$, there exists a stationary and randomized policy $\Pi$ that satisfies all the instantaneous constraints, and
\begin{equation}
\begin{cases}
\mathbb{E}\left[P_{\Sigma}^{\Pi}\left(t\right)\right]\leq P_{\Sigma,\mathbf{P_{3}}}^{\rm{opt}}+\delta\\
\lambda_{i}\leq \mathbb{E}\left[D_{\Sigma,i}^{\Pi}\left(t\right)\right]+\delta,i\in\mathcal{N}\\
\mathbb{E}\left[D^{\Pi}_{r,i}\left(t\right)\right] \leq \mathbb{E}\left[D^{\Pi}_{s,i}\left(t\right)\right]+\delta,i\in\mathcal{N}.
\end{cases}
\end{equation}
\label{stationaryrandomizedproperty}
\end{lma}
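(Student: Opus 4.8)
The plan is to follow the standard machinery of Lyapunov optimization (see \cite{Neely10}): reduce the claim to a statement about the closed convex hull of the per–time slot achievable expectation region, and then invoke a Caratheodory–type existence result for stationary randomized policies that depend only on the observed exogenous state $\omega\left(t\right)\triangleq\left(\mathbf{A}\left(t\right),\bm{\Gamma}\left(t\right)\right)$, which is i.i.d.\ across time slots.

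First, I would exploit the feasibility of $\mathbf{P_{3}}$. Fix $\delta>0$. Since $P_{\Sigma,\mathbf{P_{3}}}^{\rm{opt}}$ is defined as an infimum, there exists a (possibly history-dependent, non-stationary) control policy $\Pi_{0}$ for $\mathbf{P_{3}}$ under which all the instantaneous constraints (\ref{freqtxconstraint})--(\ref{MECserverSchedulingP1}) and $\bm{\alpha}\left(t\right)\in\tilde{\mathcal{A}}$ hold, both $Q_{i}$ and $T_{i}$ are mean rate stable, and the limiting time-average weighted sum power is at most $P_{\Sigma,\mathbf{P_{3}}}^{\rm{opt}}+\delta\slash 2$. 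Next I would convert mean rate stability into time-average rate inequalities: summing $Q_{i}\left(t+1\right)\geq Q_{i}\left(t\right)-D_{\Sigma,i}\left(t\right)+A_{i}\left(t\right)$ over $t=0,\dots,T-1$, taking expectations, dividing by $T$, and using $\mathbb{E}\left[A_{i}\left(t\right)\right]=\lambda_{i}$ together with $\mathbb{E}\left[|Q_{i}\left(T\right)|\right]\slash T\to 0$ yields $\liminf_{T}\frac{1}{T}\sum_{t=0}^{T-1}\mathbb{E}\left[D_{\Sigma,i}\left(t\right)\right]\geq\lambda_{i}$; applying the same argument to the \emph{modified} $\mathbf{P_{3}}$ dynamics $T_{i}\left(t+1\right)\geq T_{i}\left(t\right)-D_{s,i}\left(t\right)+D_{r,i}\left(t\right)$ and the mean rate stability of $T_{i}$ gives $\liminf_{T}\frac{1}{T}\sum_{t}\mathbb{E}\left[D_{s,i}\left(t\right)\right]\geq\limsup_{T}\frac{1}{T}\sum_{t}\mathbb{E}\left[D_{r,i}\left(t\right)\right]$. (Using the $\mathbf{P_{3}}$ dynamics rather than the $\mathbf{P_{2}}$ dynamics here is exactly why the auxiliary problem $\mathbf{P_{3}}$ was introduced, since in $\mathbf{P_{2}}$ the arrival to the server buffer is not $D_{r,i}\left(t\right)$.)

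The core step is then a subsequence and representation argument. Because the power terms are deterministically bounded by the box constraints (\ref{freqtxconstraint})--(\ref{Serverfreqconstraint}), the arrivals lie in $\left[A_{i,\min},A_{i,\max}\right]$, and $\mathbb{E}\left[D_{r,i}\left(t\right)\right]$ is finite (using $\ln\left(1+x\right)\leq x$, which gives $D_{r,i}\left(t\right)\leq\tau\Gamma_{i}\left(t\right)p_{i,\max}\slash\left(N_{0}\ln 2\right)$, together with $\mathbb{E}\left[\gamma_{i}\left(t\right)\right]=\overline{\gamma_{i}}<\infty$), the vector $\big(\tfrac{1}{T}\sum_{t}\mathbb{E}\left[P_{\Sigma}\left(t\right)\right],\{\tfrac{1}{T}\sum_{t}\mathbb{E}\left[D_{\Sigma,i}\left(t\right)\right]\},\{\tfrac{1}{T}\sum_{t}\mathbb{E}\left[D_{r,i}\left(t\right)-D_{s,i}\left(t\right)\right]\}\big)$ lives in a bounded set, so it admits a subsequential limit along some $T_{k}\uparrow\infty$; along this subsequence the inequalities of the previous step persist. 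Since $\omega\left(t\right)$ is i.i.d.\ and the per-slot decision set given $\omega\left(t\right)$ is compact — compactness of the bandwidth variables being precisely what the restriction $\bm{\alpha}\left(t\right)\in\tilde{\mathcal{A}}$ buys — the set of one-slot expectation vectors achievable by stationary randomized $\omega$-only policies is convex, and its closure contains every such subsequential limit; this is the Caratheodory-based result of \cite{Neely10}. Hence there is a stationary randomized $\omega$-only policy $\Pi$ whose one-slot expectations are within $\delta\slash 2$ of that limit, and combining with the bounds above and the choice of $\Pi_{0}$ delivers $\mathbb{E}\left[P_{\Sigma}^{\Pi}\left(t\right)\right]\leq P_{\Sigma,\mathbf{P_{3}}}^{\rm{opt}}+\delta$, $\lambda_{i}\leq\mathbb{E}\left[D_{\Sigma,i}^{\Pi}\left(t\right)\right]+\delta$, and $\mathbb{E}\left[D_{r,i}^{\Pi}\left(t\right)\right]\leq\mathbb{E}\left[D_{s,i}^{\Pi}\left(t\right)\right]+\delta$.

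I expect the main obstacle to be this last step: rigorously establishing that the achievable one-slot expectation region for $\omega$-only stationary randomized policies is convex and has a closure large enough to contain the subsequential limits produced by arbitrary non-stationary, channel-history-dependent policies. It requires the Caratheodory theorem to write a target expectation vector as a finite convex combination of deterministic decisions, a measurable-selection argument to assemble these into a policy measurable with respect to $\omega\left(t\right)$, and a closure argument for targets only approached in the limit — all resting on compactness of the per-slot feasible set, which would fail at $\alpha_{i}=0$ where $D_{r,i}\left(t\right)$ is discontinuous (the reason $\mathbf{P_{2}}$ and $\mathbf{P_{3}}$ impose $\alpha_{i}\geq\epsilon_{A}$). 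I would handle this not by reproving it but by invoking the corresponding theorem in \cite{Neely10}.
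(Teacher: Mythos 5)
Your proposal is correct and takes essentially the same route as the paper: the paper's entire proof is an appeal to Theorem 4.5 of \cite{Neely10}, and your argument is a faithful unpacking of that theorem's mechanics (subsequential limits of time-average expectations, the Caratheodory representation over the i.i.d.\ state $\omega\left(t\right)$, and compactness secured by $\alpha_{i}\geq\epsilon_{A}$) before ultimately invoking the same result. The only substantive addition on your side is the explicit verification of the boundedness hypotheses and the correct observation that the modified server-buffer dynamics of $\mathbf{P_{3}}$ are what make the third inequality follow from mean rate stability, both of which the paper leaves implicit.
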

\begin{proof}
The proof can be obtained by Theorem 4.5 in \cite{Neely10}, which is omitted for brevity.
\end{proof}

Based on Lemma \ref{equivalenceP2P3} and Lemma \ref{stationaryrandomizedproperty}, the worst-case average weighted sum power consumption and average sum queue length under Algorithm \ref{Algframework} and Algorithm \ref{Delayenhanced} are derived in the following theorem.
\begin{thm}
Assume that $\mathbf{P_{2}}$ is feasible, then under Algorithm \ref{Algframework} (also Algorithm \ref{Delayenhanced}), we have:
\begin{enumerate}
\item The average weighted sum power consumption satisfies:
\begin{equation}
\overline{P}_{\Sigma}^{\star}\leq P^{\rm{opt}}_{\Sigma,\mathbf{P_{2}}}+C\cdot V^{-1},
\end{equation}
where $\overline{P}^{\star}_{\Sigma}$ is the average weighted sum power consumption under the proposed algorithm.
\item For arbitrary $i\in\mathcal{N}$, $Q_{i}\left(t\right)$ and $T_{i}\left(t\right)$ are mean rate stable.
\item Suppose there exist $\epsilon>0$ and $\Psi\left(\epsilon\right)$ ($P_{\Sigma,\mathbf{P_{2}}}^{\rm{opt}}\leq \Psi\left(\epsilon\right)\leq \sum_{i\in\mathcal{N}}w_{i}\left(\kappa_{{\rm{mob}},i}f_{i,\max}^{3}+p_{i,\max}\right)+w_{N+1}\sum_{m\in\mathcal{M}}\kappa_{{\rm{ser}},m}f_{C_{m},\max}^{3}$) , and a stationary and randomized algorithm $\tilde{\Pi}$ that satisfies $\bm{\alpha}\left(t\right)\in \tilde{\mathcal{A}},t\in\mathcal{T}$, (\ref{freqtxconstraint})-(\ref{MECserverSchedulingP1}), and the following Slater conditions \cite{Neely10}:
    \begin{equation}
    \begin{cases}
    \mathbb{E}\left[P^{\tilde{\Pi}}_{\Sigma}\left(t\right)\right]=\Psi\left(\epsilon\right)\\
    \lambda_{i}\leq \mathbb{E}\left[D^{\tilde{\Pi}}_{\Sigma,i}\left(t\right)\right]-\epsilon,i\in\mathcal{N}\\
    \mathbb{E}\left[D_{r,i}^{\tilde{\Pi}}\left(t\right)\right]\leq \mathbb{E}\left[D^{\tilde{\Pi}}_{s,i}\left(t\right)\right]-\epsilon,i\in\mathcal{N},
    \end{cases}
    \end{equation}

     then the average sum queue length of the task buffers of all the mobile devices satisfies:
  \begin{equation}
    \sum_{i\in\mathcal{N}}\overline{q}_{\Sigma,i}\leq \frac{C+V\left(\Psi\left(\epsilon\right)-P_{\Sigma,\mathbf{P_{2}}}^{\rm{opt}}\right)}{\epsilon}.
   \end{equation}
\end{enumerate}
\label{performanalysis}
\end{thm}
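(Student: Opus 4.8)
The plan is to follow the standard Lyapunov drift-plus-penalty argument from \cite{Neely10}, specialized to the bound in Lemma \ref{Lyvdriftpenaltyboundlma}. Since Algorithm \ref{Algframework} minimizes the right-hand side of (\ref{Lyvdriftpenaltybound}) over all feasible $\mathbf{X}\left(t\right)$ at each slot, the resulting drift-plus-penalty is no larger than the value obtained by plugging in \emph{any} other feasible policy — in particular the stationary randomized policies furnished by Lemma \ref{stationaryrandomizedproperty} (for part 1) and the Slater policy $\tilde\Pi$ (for part 3). By Lemma \ref{equivalenceP2P3}, $P^{\rm opt}_{\Sigma,\mathbf{P_2}}=P^{\rm opt}_{\Sigma,\mathbf{P_3}}$, so I may freely work with $\mathbf{P_3}$'s optimal value. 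Because Proposition \ref{propmodialgo} shows Algorithm \ref{Delayenhanced} has identical power consumption and no larger queue lengths than Algorithm \ref{Algframework}, it suffices to prove everything for Algorithm \ref{Algframework}.

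First I would establish part 1. Comparing Algorithm \ref{Algframework}'s per-slot decision against the policy $\Pi$ from Lemma \ref{stationaryrandomizedproperty}, and using that $\Pi$'s decisions are i.i.d.\ and independent of $\bm{\Theta}(t)$, I get
\begin{equation}
\Delta_V(\bm{\Theta}(t)) \le C - \sum_{i\in\mathcal{N}} Q_i(t)\,\delta - \sum_{i\in\mathcal{N}} T_i(t)\,\delta + V\big(P^{\rm opt}_{\Sigma,\mathbf{P_3}}+\delta\big).
\nonumber
\end{equation}
Letting $\delta\to 0$ yields $\Delta_V(\bm{\Theta}(t)) \le C + V\,P^{\rm opt}_{\Sigma,\mathbf{P_2}}$. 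Taking expectations, summing over $t=0,\dots,T-1$, telescoping the drift, using $L(\bm{\Theta}(0))=0$ and $L(\bm{\Theta}(T))\ge 0$, dividing by $VT$ and letting $T\to\infty$ gives $\overline{P}^{\star}_\Sigma \le P^{\rm opt}_{\Sigma,\mathbf{P_2}} + C/V$. For part 2, I would instead drop the (nonpositive) penalty-related terms to get a bound of the form $\Delta(\bm{\Theta}(t)) \le C + V\,P_{\max}$ for a finite constant $P_{\max}$; telescoping gives $\mathbb{E}[L(\bm{\Theta}(T))] \le (C+VP_{\max})T$, hence $\mathbb{E}[\sum_i Q_i^2(T)+T_i^2(T)] = O(T)$, and by Jensen/Cauchy–Schwarz $\mathbb{E}[|Q_i(T)|]=O(\sqrt{T})$ and likewise for $T_i$, so $\mathbb{E}[|Q_i(T)|]/T\to 0$ — mean rate stability.

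For part 3, I would plug the Slater policy $\tilde\Pi$ into the drift bound in place of the optimal decision. Using the Slater conditions, the $Q_i$ and $T_i$ coefficients each pick up a $-\epsilon$ slack, giving
\begin{equation}
\Delta_V(\bm{\Theta}(t)) \le C - \epsilon\sum_{i\in\mathcal{N}}\big(Q_i(t)+T_i(t)\big) + V\,\Psi(\epsilon).
\nonumber
\end{equation}
Rearranging, $\epsilon\sum_i (Q_i(t)+T_i(t)) \le C + V\Psi(\epsilon) - \Delta_V(\bm{\Theta}(t)) \le C + V\Psi(\epsilon) - \Delta(\bm{\Theta}(t)) + V\,\mathbb{E}[P_\Sigma(t)\mid\bm{\Theta}(t)]$ is not quite the clean route; instead I would take expectations of the displayed inequality, sum over $t=0,\dots,T-1$, telescope $\sum_t \mathbb{E}[L(\bm{\Theta}(t+1))-L(\bm{\Theta}(t))] = \mathbb{E}[L(\bm{\Theta}(T))] \ge 0$, and use $V\cdot\mathbb{E}[P_\Sigma(t)\mid\bm\Theta(t)] \ge 0$ together with the already-established fact (or a direct lower bound $P_\Sigma\ge 0$) to absorb the penalty term, obtaining
\begin{equation}
\epsilon\sum_{t=0}^{T-1}\sum_{i\in\mathcal{N}}\mathbb{E}\big[Q_i(t)+T_i(t)\big] \le C\,T + V\,T\big(\Psi(\epsilon)-P^{\rm opt}_{\Sigma,\mathbf{P_2}}\big),
\nonumber
\end{equation}
after also subtracting the lower bound $\overline{P}^\star_\Sigma \ge P^{\rm opt}_{\Sigma,\mathbf{P_2}}$ in the telescoped penalty. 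Dividing by $\epsilon T$ and letting $T\to\infty$ gives the claimed bound $\sum_i \overline{q}_{\Sigma,i} \le (C + V(\Psi(\epsilon)-P^{\rm opt}_{\Sigma,\mathbf{P_2}}))/\epsilon$.

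The main obstacle I anticipate is the bookkeeping in part 3: correctly handling the penalty term $V\cdot\mathbb{E}[P_\Sigma(t)\mid\bm\Theta(t)]$ so that it cancels rather than contaminates the queue bound. This requires invoking the part-1 result — that the time-average expected penalty under Algorithm \ref{Algframework} is at least $P^{\rm opt}_{\Sigma,\mathbf{P_2}}$ (trivially, since $P^{\rm opt}$ is the infimum over feasible policies, and Algorithm \ref{Algframework} is feasible by part 2) — to lower-bound $\frac{1}{T}\sum_t V\,\mathbb{E}[P_\Sigma(t)]$ by $V\,P^{\rm opt}_{\Sigma,\mathbf{P_2}}$, which is what lets the $\Psi(\epsilon)-P^{\rm opt}$ difference appear. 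A secondary technical point is justifying that the stationary randomized / Slater policies can be compared slot-by-slot against Algorithm \ref{Algframework} inside the conditional expectation given $\bm\Theta(t)$ — this is exactly the opportunistic-minimization property of drift-plus-penalty and relies on those policies being independent of the current queue state, which Lemma \ref{stationaryrandomizedproperty} and the hypothesis on $\tilde\Pi$ guarantee.
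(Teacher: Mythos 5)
Your proposal is correct and follows essentially the same route as the paper's Appendix C: compare the per-slot minimizer against the policy of Lemma \ref{stationaryrandomizedproperty} with $\delta\to 0$ for part 1, bound $\mathbb{E}\left[L\left(\bm{\Theta}\left(T\right)\right)\right]$ by an $O\left(T\right)$ quantity and apply Jensen for part 2, and plug the Slater policy $\tilde{\Pi}$ into the drift bound while lower-bounding the telescoped penalty by $V T P_{\Sigma,\mathbf{P_{2}}}^{\rm{opt}}$ for part 3. The only blemish is a sign slip in your part-1 display (the queue terms should enter as $+\sum_{i}Q_{i}\left(t\right)\delta+\sum_{i}T_{i}\left(t\right)\delta$, since Lemma \ref{stationaryrandomizedproperty} gives slack in the opposite direction from the Slater conditions), which is immaterial because $\delta\to 0$.
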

\begin{proof}
Please refer to Appendix C.
\end{proof}

\begin{rmk}
Theorem \ref{performanalysis} shows that under the proposed online joint radio and computational resource management algorithm, the worst-case average weighted sum power consumption of the MEC system decreases inversely proportional to $V$, which demonstrates the asymptotic optimality of the proposed algorithm. It is also shown that the upper bound of the average sum queue length, i.e., the execution delay according to Little's Law, increases linearly with $V$, i.e., there exists an $\left[O\left(1\slash V\right),O\left(V\right)\right]$ tradeoff between these two objectives. Thus, we can balance the system's weighted sum power consumption and execution delay by adjusting $V$: For delay-sensitive applications, we can use a small value of $V$; while for energy-sensitive networks and delay-tolerant applications, a large value of $V$ can be adopted.
\label{rmktradeoff}
\end{rmk}

\section{Simulation Results}
In simulations, we assume $N$ mobile devices are located at an equal distance of $150$ m from the MEC server. {The small-scale fading channel power gains are exponentially distributed with unit mean, i.e., $\gamma_{i}\left(t\right)\sim \text{Exp} \left(\text{1}\right),i\in\mathcal{N}$.} Besides, $\tau=1$ ms, $\omega=10$ MHz, $N_{0}=-174$ dBm\slash Hz, $g_{0}=-40$ dB, $d_{0}=1$ m, $\theta=4$, $w_{i}= 1$, $\kappa_{{\rm{mob}},i}= 10^{-27}$, $f_{i,\max}=1$ GHz, $p_{i,\max}=500$ mW, $A_{i}\left(t\right)$ is uniformly distributed within $\left[0,A_{i,\max}\right]$, and $L_{i}=737.5$ cycles\slash bit, $i\in\mathcal{N}$ \cite{Miettinen10}. In addition, we set $f_{C_{m},\max}= 2.5$ GHz, $\kappa_{{\rm{ser}},m}=10^{-27}$, $m\in\mathcal{M}$. The simulation results are averaged over $10000$ time slots.
\vspace{-15pt}
\begin{figure}[ht]
\centering
\includegraphics[width=0.6\textwidth]{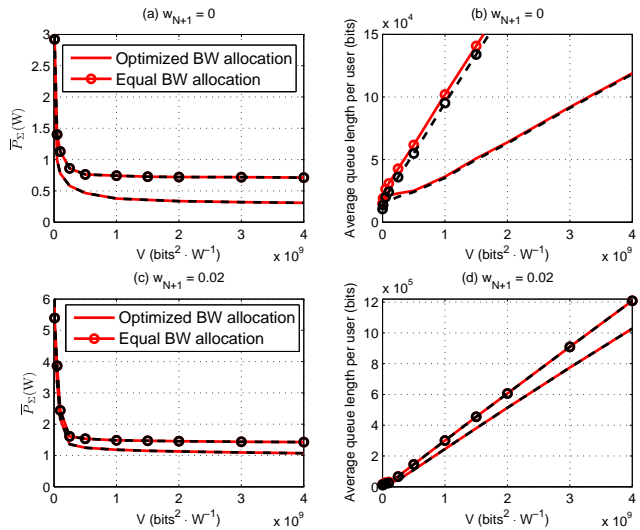}
\vspace{-20pt}
\caption{Average weighted sum power consumption/average sum queue length of the task buffers per mobile device vs. $V$, $N=5$, $M=8$, and $A_{i,\max}=8$ kbits. (The solid curves and dash curves correspond to Algorithm \ref{Algframework} and Algorithm \ref{Delayenhanced}, respectively, while the curves marked with circles are obtained by setting $\alpha_{i}\left(t\right)=1\slash N,i\in\mathcal{N}$ in the proposed algorithms.)}
\label{FIGPwWRQV}
\end{figure}

We first validate the theoretical results derived in Theorem \ref{performanalysis} for the proposed algorithms in Fig. \ref{FIGPwWRQV}. Two scenarios with $w_{N+1}=0$ and $0.02$, respectively, are considered. The average weighted sum power consumption for the scenario with $w_{N+1}=0$ ($w_{N+1}=0.02$) is shown in Fig. \ref{FIGPwWRQV}a) (Fig. \ref{FIGPwWRQV}c)), while the average sum queue length of the task buffers per mobile device, i.e., $\sum_{i\in\mathcal{N}}\overline{q}_{\Sigma,i}\slash N$, for $w_{N+1}=0$ ($w_{N+1}=0.02$) is shown in Fig. \ref{FIGPwWRQV}b) (Fig. \ref{FIGPwWRQV}d)). It can be observed from Fig. \ref{FIGPwWRQV}a) and Fig. \ref{FIGPwWRQV}c) that the average weighted sum power consumption decreases inversely proportional to the control parameter $V$ and converges to $P_{\Sigma,\mathbf{P_{2}}}^{\rm{opt}}$ when $V$ is sufficiently large. Meanwhile, as shown in Fig. \ref{FIGPwWRQV}b) and Fig. \ref{FIGPwWRQV}d), the average sum queue length of the task buffers increases linearly with $V$ for both Algorithm \ref{Algframework} and its delay-improved version (Algorithm \ref{Delayenhanced}), which verifies the $\left[O\left(1\slash V\right),O\left(V\right)\right]$ tradeoff between the average weighted sum power consumption and the average execution delay as demonstrated in Theorem \ref{performanalysis} and Remark \ref{rmktradeoff}. Furthermore, we see that increasing $w_{N+1}$ results in a higher average sum queue length, which comes from the slowdown of the server CPU frequencies.

The proposed algorithm and its delay-improved version with equal bandwidth allocation, i.e., $\alpha_{i}\left(t\right)=1\slash N,i\in\mathcal{N}$, are also evaluated in Fig. \ref{FIGPwWRQV}. We see the performance of both the average weighted sum power consumption and the average sum queue length of the task buffers are deteriorated compared to those achieved with optimized bandwidth allocation. This demonstrates the importance of a joint consideration on the radio and computational resource management for multi-user MEC systems. Besides, it is shown that the average weighted sum power consumptions under both Algorithm \ref{Algframework}  and Algorithm \ref{Delayenhanced} are the same, while the average sum queue length of the task buffers under the delay-improved mechanism is reduced, which agrees with Proposition \ref{propmodialgo}. Such queue length (i.e., execution delay) reduction is more obvious when $w_{N+1}$ and $V$ get closer to zero. This is because under these circumstances, the frequencies of the server CPU cores are much higher than what are needed for serving the mobile device with the highest value of $T_{i}\left(t\right)L_{i}^{-1}$. In other words, more computational resource provided by the MEC server is wasted under Algorithm \ref{Algframework} and can be re-allocated for other mobile devices. We only focus on Algorithm \ref{Delayenhanced} in the following, as it always achieves improved performance compared to Algorithm \ref{Algframework}.

\begin{figure}
  \begin{minipage}[t]{8cm}
    \centering
    \includegraphics[width=1\textwidth]{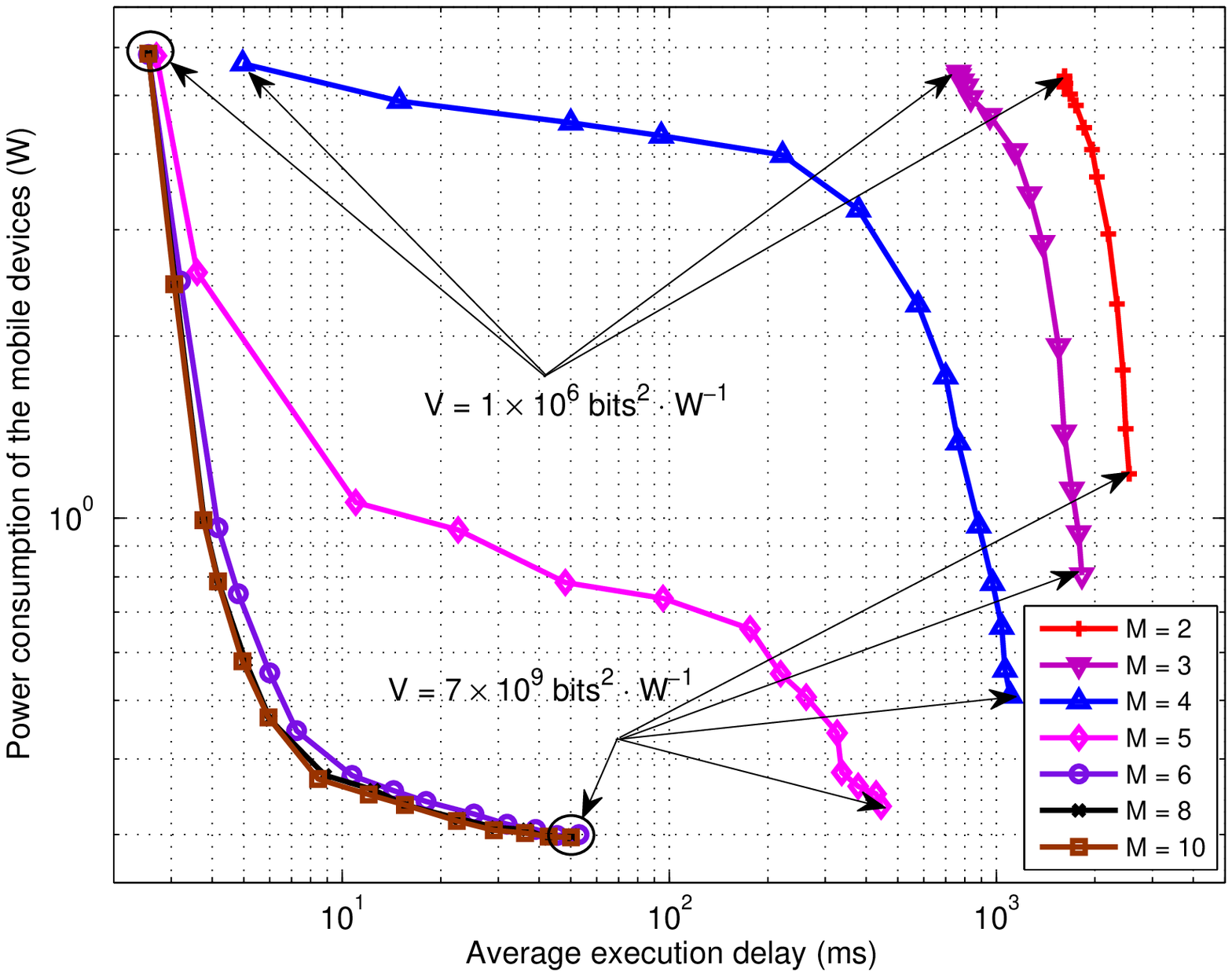}
    \vspace{-40pt}
    \caption{Power consumption of the mobile devices vs. average execution delay, $w_{N+1}=0$, $N=5$, and $A_{i,\max}=8$ kbits.}
    \label{ImpactMECcapability}
  \end{minipage}%
  \hspace{8pt}
  \begin{minipage}[t]{8cm}
    \centering
    \includegraphics[width=1\textwidth]{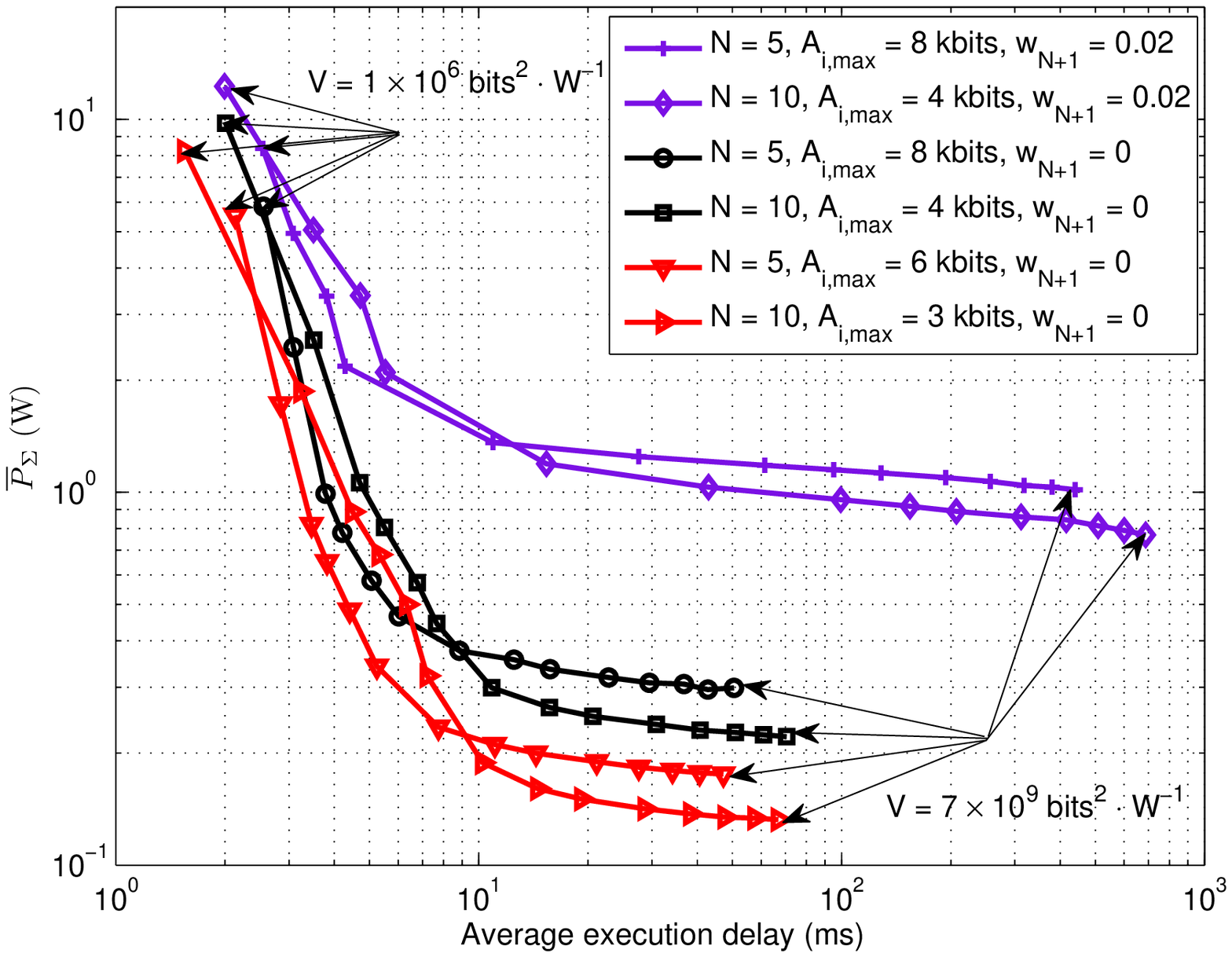}
      \vspace{-40pt}
    \caption{Weighted sum power consumption of the MEC system vs. average execution delay, $M=8$.}
    \label{MUdiversity}
  \end{minipage}
\end{figure}

In Fig. \ref{ImpactMECcapability}, we reveal the impacts of the computation capability of the MEC server by setting $w_{N+1}=0$ and varying the number of CPU cores $M$. In these scenarios, the power consumption of the MEC server is not addressed in the optimization and thus the server CPU cores will operate at their maximum frequencies. As can be seen from Fig. \ref{ImpactMECcapability}, the system performance improves as the value of $M$ increases, i.e., the power consumption of the mobile devices decreases for a given average execution delay.\footnote{The average execution delay is calculated by $\sum_{i\in\mathcal{N}}\overline{q}_{\Sigma,i}\slash \sum_{i\in\mathcal{N}}\lambda_{i}$ (time slots) according to Little's Law.} Also, by increasing $V$ from $1\times10^{6}$ to $7\times10^{9}\ {\rm{bit}}^{2}\cdot {\rm{W}}^{-1}$, the power consumption decreases significantly for all curves. Nonetheless, the behaviors of the curves for different values of $M$ are substantially different. For a relatively large value of $M$, i.e., $M=$ 6, 8 and 10, the average execution delay decreases sharply from 52 to 2.6 ms as $V$ decreases, and the power and delay performance follows the $\left[O\left(1\slash V\right),O\left(V\right)\right]$ tradeoff. However, for a relatively small value of $M$, the average execution delay has minor changes as $V$ is adjusted, i.e., around 2000 and 1000 ms for $M=$ 2 and 3, respectively, and the speed of power reduction increases as the average execution delay increases. This is because with insufficient computational resource at the MEC server, the task buffers cannot be stabilized even with $V=1\times 10^{6}\ {\rm{bit}}^{2}\cdot {\rm{W}}^{-1}$ within 10000 time slots. Similar phenomenon exists for the curves with a medium value of $M$ (such as 4 and 5) when a large value of $V$ is used. In addition, the performance gain achieved by increasing $M$ from 6 to 10 is negligible, which indicates that the computation capability of the MEC server should be chosen carefully to balance the system cost and the achievable performance. In particular, once the system is constrained by the radio resource, there is no need to deploy too much computational resource.



By varying $A_{i,\max}$ and $N$, we show the relationship between the average weighted sum power consumption and the average execution delay in Fig. \ref{MUdiversity}. Similar to Fig. \ref{FIGPwWRQV} and Fig. \ref{ImpactMECcapability}, the average execution delay increases as the weighted sum power consumption decreases, which indicates that a proper $V$ should be chosen to balance the two desirable objectives. For instance, with $N=5$, $A_{i,\max}=8$ kbits and $w_{N+1}=0$, if the average execution delay requirement is 20 ms, $V=3\times 10^{9}$ ${\rm{bits}}^{2}\cdot {\rm{W}}^{-1}$ can be chosen, and the sum power consumption of the mobile devices will be about 0.3 W. Besides, with a given execution delay, the average weighted sum power consumption increases with the computation task arrival rate, which fits the intuition, i.e., as the workload of the MEC system becomes more intensive, more power is needed for keeping the task buffers stable. In addition, given a total task arrival rate in the MEC system, increasing the number of mobile devices while decreasing the task arrival rate at individual device results in a lower weighted sum power consumption when $V$ is tuned sufficiently large, i.e., the proposed algorithm is approaching the optimal performance. This is due to the increased multi-user diversity gain and the availability of extra local processing units.


\begin{figure}
  \begin{minipage}[t]{8cm}
    \centering
    \includegraphics[width=1\textwidth]{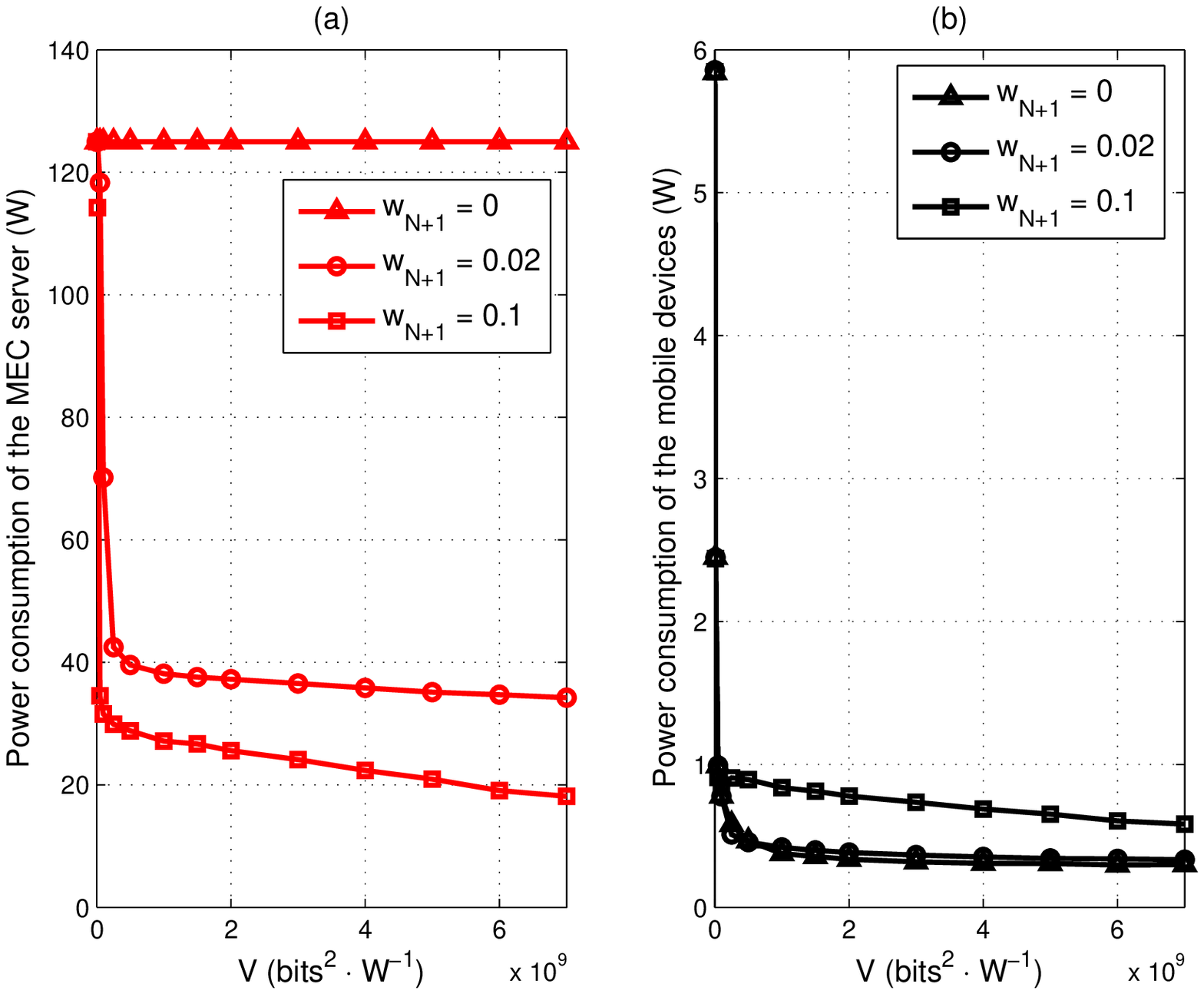}
       \vspace{-40pt}
    \caption{Power consumption of the MEC server/mobile devices vs. $V$, $M=8$, $N=5$ and $A_{i,\max}=8$ kbits.}
    \label{ImpactWeightPower}
  \end{minipage}%
  \hspace{8pt}
  \begin{minipage}[t]{8cm}
    \centering
    \includegraphics[width=1\textwidth]{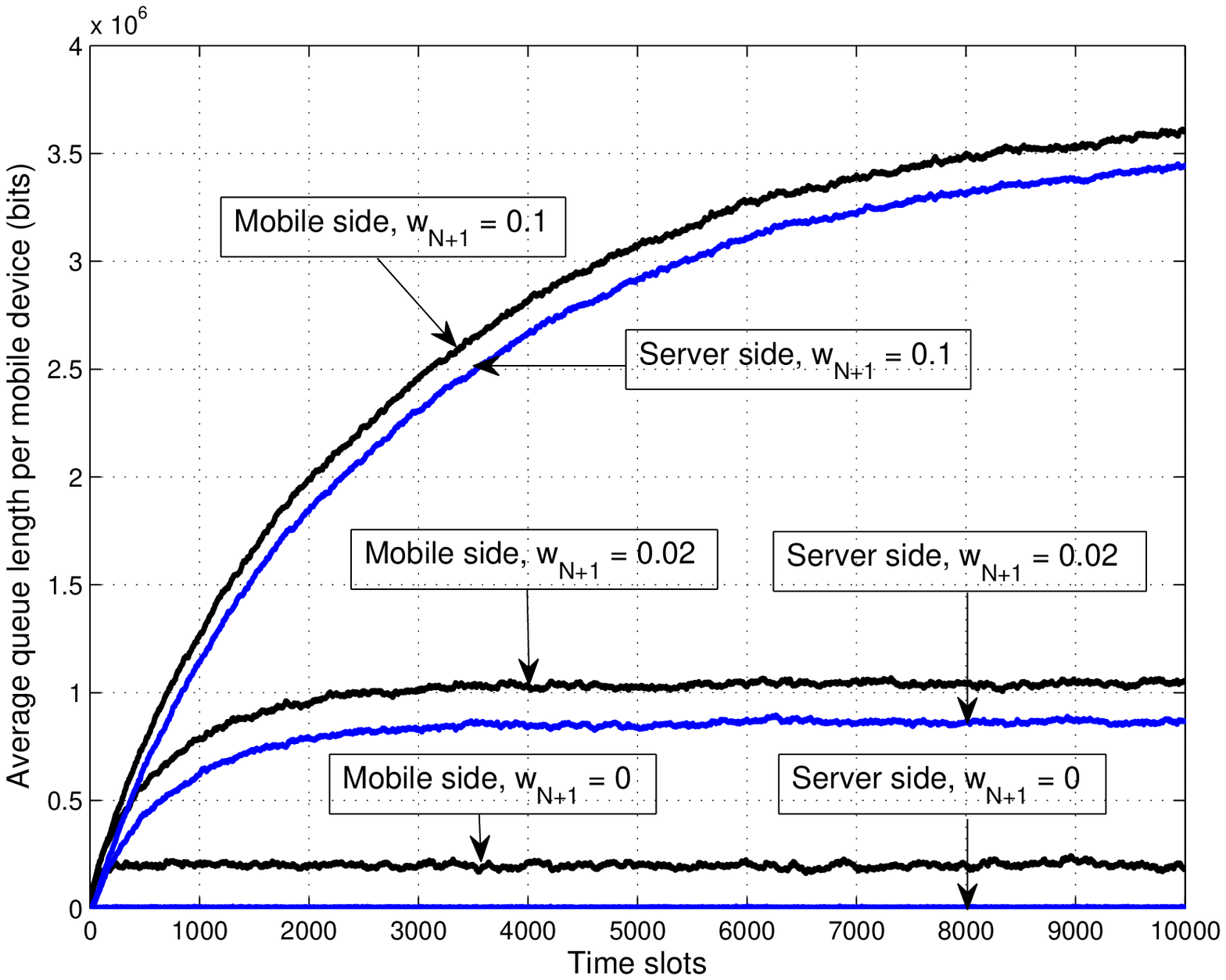}
       \vspace{-40pt}
    \caption{Average queue length per mobile device vs. time, $N=5$, $M=8$, $A_{i,\max}=8$ kbits, and $V=7\times 10^{9}\ {{\rm{bits}}^{2}\cdot \mathrm{W}^{-1}}$.}
    \label{MonitorQueue}
  \end{minipage}
\end{figure}

We show the impacts of $w_{N+1}$ on the power consumption of the MEC server and the mobile devices in Fig. \ref{ImpactWeightPower}. {It can be observed that by increasing $w_{N+1}$, the power consumption of the MEC server decreases while that of the mobile devices increases for a given value of $V$, which confirms that the power consumption of the mobile devices and the MEC server can be balanced by adjusting the weighting factors.} To be more specific, the power consumption of the MEC server is a constant for $w_{N+1}=0$ due to the optimal frequencies of the server CPU cores derived in (\ref{mcorefreq}). By increasing $w_{N+1}$ from 0 to 0.02, the power consumption of the MEC server is greatly reduced while that of the mobile devices has only minor increase, which leads to higher queue lengths at the task buffers. By increasing $w_{N+1}$ from 0.02 to 0.1, the power consumption of the MEC server further decreases and that of the mobile devices increases noticeably. In this case, the local CPUs needs to increase their CPU-cycle frequencies to compensate the processing speed reduction at the MEC server in order to keep the task buffers stable. For better demonstration, we show the evolution of the average queue lengths at the mobile devices (i.e., $\sum_{i\in\mathcal{N}}Q_{i}\left(t\right)\slash N$) and the MEC server (i.e., $\sum_{i\in\mathcal{N}}T_{i}\left(t\right)\slash N$) over time in Fig. \ref{MonitorQueue}. We see for $w_{N+1}=0$, the amount of waiting tasks in the server side buffers is near zero as the MEC server is able to execute all tasks offloaded from the mobile devices, and the queue length of the local task buffers is also maintained at a small level as most tasks can be offloaded to and executed by the MEC server. For $w_{N+1}=0.02$, the amount of tasks in the buffers increases due to the decrease of CPU frequencies at the MEC server. With $w_{N+1}=0.1$, the queue lengths of the task buffers at both the mobile devices and the MEC server are much higher than those achieved by $w_{N+1}=0.02$ and keep increasing as time elapses, which implies that the system cannot converge to its stabilized performance within the time window used in simulations, and in turn lifts up the local CPU-cycle frequencies and the power consumption of the mobile devices.

\begin{figure}[h]
\centering
\includegraphics[width=0.6\textwidth]{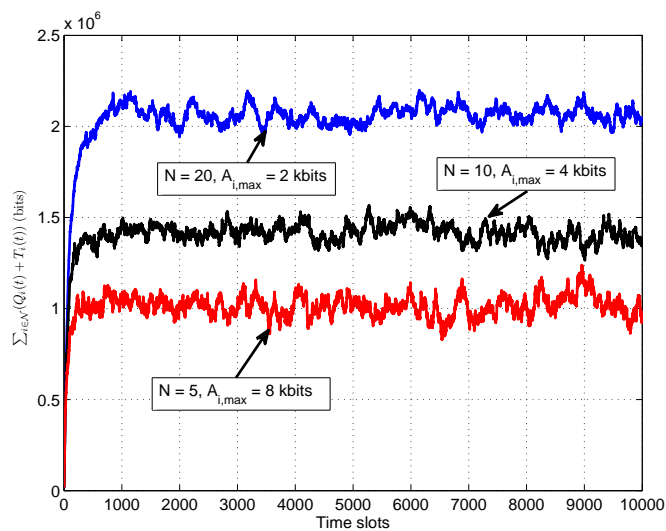}
\vspace{-20pt}
\caption{Sum queue length of the task buffers vs. time slots, $M=8$, $\sum_{i\in\mathcal{N}}\lambda_{i}=20$\ kbits, $V = 7\times 10^{9}\ {\rm{bit}}^{2}\cdot {\rm{W}}^{-1}$.}
\label{SumQueLengthTS}
\end{figure}

{Finally, we demonstrate the impact of the network size on the convergence time under the proposed algorithm by tracing the sum queue length of the task buffers for systems with different numbers of mobile devices, while keeping the total computation task arrival rate in the MEC system unchanged at $\sum_{i\in\mathcal{N}}\lambda_{i}= 20\ \text{kbits}$. It is observed from Fig. \ref{SumQueLengthTS} that for all the three cases, the sum queue length of the task buffers increases at the beginning, and stabilizes at around $1\times 10^6\ {\text{bits}}$, $1.4 \times 10^{6}\ \text{bits}$, and $2 \times 10^{6}\ \text{bits}$ for $N=5$, $10$, and $20$, respectively. Besides, we see that although a larger value of $N$ leads to a longer convergence time, the differences are minor and the task buffers are able to be stabilized within approximately 1000 time slots for all the cases, which indicates that the proposed algorithms are scalable to MEC systems with a reasonable amount of mobile devices.}



\section{Conclusions}
In this paper, we investigated stochastic joint radio and computational resource management for multi-user mobile-edge computing (MEC) systems. A low-complexity online algorithm based on Lyapunov optimization was proposed, which achieves asymptotic optimality. Furthermore, a delay-improved mechanism was designed to reduce the execution delay. Performance analysis, as well as simulations, explicitly characterize the tradeoff between the average weighted sum power consumption and the average execution delay in a multi-user MEC system. Moreover, the impacts of various parameters were revealed, which confirm the importance of the joint radio and computational resource management, and provide valuable guidelines for the practical deployment of MEC systems. For future investigation, it would be interesting to extend the findings in this work to scenarios with fairness considerations among multiple devices and seek distributed implementation methodologies. {Other research problems, such as mobility-aware resource management, dynamic access control and user-server association, should also be investigated.}

\vspace{-15pt}
\appendix
\vspace{-5pt}
\subsection{Proof for Lemma \ref{Lyvdriftpenaltyboundlma}}
By squaring both sides of the local task buffer dynamics in (\ref{bufferdynamics}), we have
\begin{equation}
\begin{split}
Q^{2}_{i}\left(t+1\right)
&=\left(\max\{Q_{i}\left(t\right)-D_{\Sigma,i}\left(t\right),0\}\right)^{2}+A_{i}^{2}\left(t\right)+2A_{i}\left(t\right)\cdot \max\{Q_{i}\left(t\right)-D_{\Sigma,i}\left(t\right),0\}\\
&\leq \left(Q_{i}\left(t\right)-D_{\Sigma,i}\left(t\right)\right)^{2}+A_{i}^{2}\left(t\right)+2 A_{i}\left(t\right) Q_{i}\left(t\right)\\
&=Q_{i}^{2}\left(t\right)-2Q_{i}\left(t\right)\left(D_{\Sigma,i}\left(t\right)-A_{i}\left(t\right)\right)+A_{i}^{2}\left(t\right)+D_{\Sigma,i}^{2}\left(t\right).
\end{split}
\label{proofLemmaDPPbound1Q}
\end{equation}
By moving $Q_{i}^{2}\left(t\right)$ to the left-hand side of (\ref{proofLemmaDPPbound1Q}), dividing both sides by 2, and summing up the inequalities for $i=1,\cdots,N$, we have
\begin{equation}
\frac{1}{2}\sum_{i\in\mathcal{N}}\left[Q_{i}^{2}\left(t+1\right)-Q_{i}^{2}\left(t\right)\right]\leq -\sum_{i\in\mathcal{N}}Q_{i}\left(t\right)\left(D_{\Sigma,i}\left(t\right)-A_{i}\left(t\right)\right) + \sum_{i\in\mathcal{N}}\frac{A_{i}^{2}\left(t\right)+D_{\Sigma,i}^{2}\left(t\right)}{2}.
\label{proofLemmaDPPbound2Q}
\end{equation}

Similar manipulations can be made for the MEC server task buffer dynamics in (\ref{bufferdynamicsMEC}) as follows:
\begin{equation}
\begin{split}
T^{2}_{i}\left(t+1\right)
&\leq \left(\max\{T_{i}\left(t\right)-D_{s,i}\left(t\right),0\}+D_{r,i}\left(t\right)\right)^{2}\\
&=\left(\max\{T_{i}\left(t\right)-D_{s,i}\left(t\right),0\}\right)^{2}+D_{r,i}^{2}\left(t\right)+2D_{r,i}\left(t\right)\cdot \max\{T_{i}\left(t\right)-D_{s,i}\left(t\right),0\}\\
&\leq \left(T_{i}\left(t\right)-D_{s,i}\left(t\right)\right)^{2}+D_{r,i}^{2}\left(t\right)+2 D_{r,i}\left(t\right) T_{i}\left(t\right)\\
&=T_{i}^{2}\left(t\right)-2T_{i}\left(t\right)\left(D_{s,i}\left(t\right)-D_{r,i}\left(t\right)\right)+D_{r,i}^{2}\left(t\right)+D_{s,i}^{2}\left(t\right).
\end{split}
\label{proofLemmaDPPbound1T}
\end{equation}
By moving $T_{i}^{2}\left(t\right)$ to the left-hand side of (\ref{proofLemmaDPPbound1T}), dividing both sides by 2, and summing up the inequalities for $i=1,\cdots,N$, we have
\begin{equation}
\frac{1}{2}\sum_{i\in\mathcal{N}}\left[T_{i}^{2}\left(t+1\right)-T_{i}^{2}\left(t\right)\right]\leq -\sum_{i\in\mathcal{N}}T_{i}\left(t\right)\left(D_{s,i}\left(t\right)-D_{r,i}\left(t\right)\right) + \sum_{i\in\mathcal{N}}\frac{D_{r,i}^{2}\left(t\right)+D_{s,i}^{2}\left(t\right)}{2}.
\label{proofLemmaDPPbound2T}
\end{equation}
By summing up (\ref{proofLemmaDPPbound2Q}) and (\ref{proofLemmaDPPbound2T}), we have
\begin{equation}
\begin{split}
L\left(\bm{\Theta}\left(t+1\right)\right)-L\left(\bm{\Theta}\left(t\right)\right)&\leq -\sum_{i\in\mathcal{N}}\left[Q_{i}\left(t\right)\left(D_{\Sigma,i}\left(t\right)-A_{i}\left(t\right)\right)+T_{i}\left(t\right)\left(D_{s,i}\left(t\right)-D_{r,i}\left(t\right)\right)\right]\\ &+ \frac{1}{2}\sum_{i\in\mathcal{N}}\left(A_{i}^{2}\left(t\right)+D^{2}_{\Sigma,i}\left(t\right)
+D^{2}_{s,i}\left(t\right)+D_{r,i}^{2}\left(t\right)\right).
\end{split}
\label{DPPboundwithoutExp}
\end{equation}
Since $\log_{2}\left(1+x\right)\leq \frac{x}{\ln 2}$ and $\log^{2}_{2}\left(1+x\right)\leq \frac{2x}{\left(\ln 2\right)^{2}}$ for $x\geq 0$, we have $\mathbb{E}\left[D_{r,i}^{2}\left(t\right)|\bm{\Theta}\left(t\right)\right]\leq \frac{\omega \eta_{i}}{\ln 2}$ and $\mathbb{E}\left[D_{\Sigma,i}^{2}\left(t\right)|\bm{\Theta}\left(t\right)\right]\leq \mathbb{E}\left[\left(\tau f_{i,\max}L_{i}^{-1}+D_{r,i}\left(t\right)\right)^{2}|\bm{\Theta}\left(t\right)\right]\leq \tau^{2}f_{i,\max}^{2}L_{i}^{-2}+\eta_{i}\left(f_{i,\max}L_{i}^{-1}+\frac{\omega}{\ln 2}\right)$ with $\eta_{i}=\frac{2g_{0}\overline{\gamma_{i}}p_{i,\max}d_{0}^{\theta}\tau^{2}}{\ln 2 N_{0}d_{i}^{\theta}}$. Finally, by adding $V\cdot P_{\Sigma}\left(t\right)$ at both sides of (\ref{DPPboundwithoutExp}) and taking the expectation on both sides conditioned on $\bm{\Theta}\left(t\right)$, we can obtain the desired result in (\ref{Lyvdriftpenaltybound}), where
\begin{equation}
C=\frac{1}{2}\sum_{i\in\mathcal{N}}\left[A_{i,\max}^{2}+\left(\sum_{m\in\mathcal{M}}f_{C_{m},\max}\tau L_{i}^{-1}\right)^{2}+\left(f_{i,\max}\tau L_{i}^{-1}\right)^{2}+\eta_{i}\left(f_{i,\max}L_{i}^{-1}+\frac{2\omega }{\ln 2}\right)\right].
\end{equation}

\vspace{-20pt}
\subsection{Proof for Lemma \ref{propertySP3}}
Since $T_{i}\left(t\right)\geq 0, i\in\mathcal{N}$, there exists an optimal solution such that $\sum_{n\in\mathcal{N}}D_{s,n}\left(t\right)L_{n}= \sum_{m\in\mathcal{M}}f_{C,m}\left(t\right)\tau$. For $\sum_{m\in\mathcal{M}} f_{C,m}\left(t\right)=0$, the optimal scheduling decision $\mathbf{D}_{s}\left(t\right)=\mathbf{0}$. For $\sum_{m\in\mathcal{M}} f_{C,m}\left(t\right)>0$, suppose there is an optimal scheduling decision $\mathbf{D}_{s}\left(t\right)$ with $|\mathbf{D}_{s}\left(t\right)|_{0}\geq 2$. Denote $\mathcal{S}\triangleq \{n|D_{s,n}\left(t\right)>0,n\in\mathcal{N}\}$ where $|\mathcal{S}|=|\mathbf{D}_{s}\left(t\right)|_{0}$. We can construct a new feasible scheduling decision $\hat{\mathbf{D}}_{s}\left(t\right)$ with $\hat{D}_{s,i^{\max}_{\mathcal{S}}}\left(t\right)= D_{s,i^{\max}_{\mathcal{S}}}\left(t\right)+L_{i^{\max}_{\mathcal{S}}}^{-1}\sum_{j\in\mathcal{S},j\neq i^{\max}_{\mathcal{S}}}D_{s,j}\left(t\right)L_{j}$ and $\hat{D}_{s,i}\left(t\right)=0, i\neq i^{\max}_{\mathcal{S}}$, where $i^{\max}_{\mathcal{S}}$ is the mobile device in $\mathcal{S}$ with the highest value of $T_{i}\left(t\right)L_{i}^{-1}$. With the constructed solution, the value of the objective function will decrease by
\begin{equation}
-\sum_{i\in\mathcal{N}}T_{i}\left(t\right)\left(D_{s,i}\left(t\right)-\hat{D}_{s,i}\left(t\right)\right)
=\sum_{j\in\mathcal{S},j\neq i_{\mathcal{S}}^{\max}}D_{s,j}\left(t\right)L_{i_{\mathcal{S}}^{\max}}^{-1}\left(T_{i_{\mathcal{S}}^{\max}}\left(t\right)L_{j}-T_{j}\left(t\right)L_{i_{\mathcal{S}}^{\max}}\right)\geq 0,
\end{equation}
i.e., $\hat{\mathbf{D}}_{s}\left(t\right)$ with $|\hat{\mathbf{D}}_{s}\left(t\right)|_{0}= 1$ performs no worse than $\mathbf{D}_{s}\left(t\right)$. In addition, if $i_{\mathcal{S}}^{\max}\neq i_{\mathcal{N}}^{\max}$, we can serve the $i_{\mathcal{N}}^{\max}$th device with the CPU cycles that are originally allocated for the $i_{\mathcal{S}}^{\max}$th device, and the value of the objective function will further decrease by $\hat{D}_{s,i_{\mathcal{S}}^{\max}}\left(t\right)L^{-1}_{i_{\mathcal{N}}^{\max}}(T_{i_{\mathcal{N}}^{\max}}\left(t\right)L_{i^{\max}_\mathcal{S}}-
T_{i_{\mathcal{S}}^{\max}}\left(t\right)L_{i_{\mathcal{N}}^{\max}})\geq 0$. In other words, there exists an optimal solution for $\mathbf{SP_{3}}$ for a given $\mathbf{f}_{C}\left(t\right)$ such that $D_{s,i_{\mathcal{N}}^{\max}}\left(t\right)\geq 0$ and $D_{s,i}\left(t\right)=0, i\neq i_{\mathcal{N}}^{\max}$.


\subsection{Proof for Theorem \ref{performanalysis}}
Denote the optimal solution for the per-time slot problem in time slot $t$ as $\mathbf{X}^{\star}\left(t\right)$. Since the system operation at each time slot is the optimal solution of the per-time slot problem, we have
\begin{equation}
\begin{split}
\Delta_{V}\left(\bm{\Theta}\left(t\right)\right)&\leq C - \mathbb{E}\left[\sum_{i\in\mathcal{N}}Q_{i}\left(t\right)\left(D^{\star}_{\Sigma,i}\left(t\right)-A_{i}\left(t\right)\right)|\bm{\Theta}\left(t\right)\right]\\
&-\mathbb{E}\left[\sum_{i\in\mathcal{N}}T_{i}\left(t\right)\left(D^{\star}_{s,i}\left(t\right)-D^{\star}_{r,i}\left(t\right)\right)|\bm{\Theta}\left(t\right)\right]+
V\cdot \mathbb{E}\left[P^{\star}_{\Sigma}\left(t\right)|\bm{\Theta}\left(t\right)\right]\\
&\overset{\left(a\right)}{\leq} C + \sum_{i\in\mathcal{N}}\left(Q_{i}\left(t\right)+T_{i}\left(t\right)\right)\delta + V \cdot \left(P_{\Sigma,\mathbf{P_{3}}}^{\rm{opt}}+\delta\right),\\
\end{split}
\end{equation}
where (\emph{a}) holds since the stationary and randomized policy $\Pi$ is sub-optimal for $\mathbf{P}_{\rm{PTS}}$ and also due to the properties of $\Pi$ as shown in Lemma \ref{stationaryrandomizedproperty}. By taking
$\delta\rightarrow 0$, we have
\begin{equation}
\Delta_{V}\left(\bm{\Theta}\left(t\right)\right)\leq C + V\cdot P_{\Sigma,\mathbf{P_{3}}}^{\rm{opt}} = C + V\cdot P_{\Sigma,\mathbf{P_{2}}}^{\rm{opt}}.
\label{prfThm12}
\end{equation}
By taking the expectation on both sides of (\ref{prfThm12}), summing up the inequalities for $t=0,\cdots,T-1$, we have
\begin{equation}
\mathbb{E}\left[L\left(\bm{\Theta}\left(T\right)\right)\right]-\mathbb{E}\left[L\left(\bm{\Theta}\left(0\right)\right)\right]
+V\cdot \sum_{t=0}^{T-1}\mathbb{E}\left[P^{\star}_{\Sigma}\left(t\right)\right]\leq
T\cdot \left(C + V\cdot P_{\Sigma,\mathbf{P_{2}}}^{\rm{opt}}\right),
\label{prfThm13}
\end{equation}
where $P^{\star}_{\Sigma}\left(t\right)$ is the weighted sum power consumption in time slot $t$ under $\mathbf{X}^{\star}\left(t\right)$. Since $\bm{\Theta}\left(0\right)=\mathbf{0}$ and $\mathbb{E}\left[L\left(\bm{\Theta}\left(T\right)\right)\right]\geq 0$, by dividing both sides of (\ref{prfThm13}) by $T$ and letting $T$ go to infinite, we have $\overline{P}_{\Sigma}^{\star}\leq P_{\Sigma,\mathbf{P_{2}}}^{\rm{opt}} + C\cdot V^{-1}$.

We now proceed to show the mean rate stability of the task buffers. Denote $\Xi\left(T\right)\triangleq T \cdot \left(C + V \cdot P_{\Sigma,\mathbf{P_{2}}}^{\rm{opt}}\right)- V\cdot \sum_{t=0}^{T-1}\mathbb{E}\left[P_{\Sigma}^{\star}\left(t\right)\right]+\mathbb{E}\left[L\left(\bm{\Theta}\left(0\right)\right)\right]$.
Based on (\ref{prfThm13}), we have $\mathbb{E}\left[L\left(\bm{\Theta}\left(T\right)\right)\right]\leq \Xi\left(T\right)$, i.e., $\mathbb{E}\left[Q^{2}_{i}\left(T\right)\right]\leq 2 \Xi\left(T\right)$ and $\mathbb{E}\left[T^{2}_{i}\left(T\right)\right]\leq 2 \Xi\left(T\right)$. Thus, we have $0\leq \mathbb{E}\left[|Q_{i}\left(T\right)|\right]\leq \sqrt{2 \Xi\left(T\right)}$ and $0\leq \mathbb{E}\left[|T_{i}\left(T\right)|\right]\leq \sqrt{2 \Xi\left(T\right)},i\in\mathcal{N}$. Since $\lim_{T\rightarrow +\infty}\frac{\sqrt{2\Xi\left(T\right)}}{T}=0$, we have
\begin{equation}
\lim_{T\rightarrow +\infty} \frac{\mathbb{E}\left[|Q_{i}\left(T\right)|\right]}{T}=\lim_{T\rightarrow +\infty} \frac{\mathbb{E}\left[|T_{i}\left(T\right)|\right]}{T}=0,i\in\mathcal{N},\\
\end{equation}
i.e., $\forall i\in\mathcal{N}$, $Q_{i}\left(t\right)$ and $T_{i}\left(t\right)$ are mean rate stable under Algorithm \ref{Algframework}.

In order to show the upper bound of the average sum queue length, we plug the stationary and randomized policy $\tilde{\Pi}$ into the right-hand side of (\ref{Lyvdriftpenaltybound}), i.e.,
\begin{equation}
\Delta_{V}\left(\bm{\Theta}\left(t\right)\right)\leq C - \epsilon\cdot \sum_{i\in\mathcal{N}}\left(Q_{i}\left(t\right)+T_{i}\left(t\right)\right)+ V \cdot \Psi\left(\epsilon\right).
\end{equation}
By taking the expectation on both sides of the inequality, summing up the inequalities for $t=0,\cdots,T-1$, dividing both sides by $T \epsilon$, and letting $T$ go to infinite, we have
\begin{equation}
\begin{split}
\sum_{i\in\mathcal{N}}\overline{q}_{\Sigma,i}
\leq \frac{C+ V\cdot \left(\Psi\left(\epsilon\right)-\lim\limits_{T\rightarrow +\infty}\frac{1}{T}\sum_{t=0}^{T-1}\mathbb{E}\left[P^{\star}_{\Sigma}\left(t\right)\right]\right)}{\epsilon} \leq \frac{C+ V\cdot \left(\Psi\left(\epsilon\right)-P_{\Sigma,\mathbf{P_{2}}}^{\rm{opt}}\right)}{\epsilon}.
\end{split}
\end{equation}
With the aid of Proposition \ref{propmodialgo}, we can show Theorem \ref{performanalysis} also holds for Algorithm \ref{Delayenhanced}.

\end{document}